\tikzset{>=stealth}
\newcommand{\sn}{\smallskip\noindent}
\newcommand{\bbbb}{\ensuremath{\mathrm{I\!B}}}
\newcommand{\bbbf}{\ensuremath{\mathrm{I\!F}}}
\newcommand{\bfunc}{\ensuremath{\mathcal{B}}}
\newtheorem{definition}{Definition}
\newtheorem{lemma}{Lemma}
\newtheorem{example}{Example}
\newtheorem{remark}{Remark}
\newtheorem{theorem}{Theorem}
\newcommand{\eqdef}{\stackrel{\text{\normalfont def}}{=}}
\newcommand{\onset}{\ensuremath{\operatorname{on}}}
\newcommand{\offset}{\ensuremath{\operatorname{off}}}
\newcommand{\nat}{\ensuremath{\operatorname{nat}}}
\newcolumntype{L}{>{$}l<{$}}
\newcolumntype{C}{>{$}c<{$}}
\tikzset{%
  terminal vertex/.style={draw,rectangle,inner sep=0pt,minimum width=10pt,minimum height=10pt},
}
\def\low{\mathop{\rm low}}
\def\high{\mathop{\rm high}}
\def\bddFalse{\tikz[baseline=(x.base)] \node[terminal vertex] (x) {$\bot$};}
\def\bddTrue{\tikz[baseline=(x.base)] \node[terminal vertex] (x) {$\top$};}
\newcommand{\T}{\ensuremath{\mathrm{T}}}
\def\xskip{\hskip 7pt plus 3pt minus 4pt}
\newdimen\algindent
\newif\ifitempar \itempartrue % normally true unless briefly set false
\def\algindentset#1{\setbox0\hbox{{\bf #1.\kern.25em}}\algindent=\wd0\relax}
\def\algbegin #1 #2{\algindentset{#21}\alg #1 #2} % when steps all have 1 digit
\def\aalgbegin #1 #2{\algindentset{#211}\alg #1 #2} % when 10 or more steps
\def\alg#1(#2). {\medbreak % Usage: \algbegin Algorithm A (algname). This...
  \noindent{\bf#1}{({\it#2\/})}.\xskip\ignorespaces}
\def\algstep#1.{\ifitempar\smallskip\noindent\else\itempartrue
  \hskip-\parindent\fi
  \hbox to\algindent{\bf\hfil #1.\kern.25em}%
  \hangindent=\algindent\hangafter=1\ignorespaces}
\def\slug{\hbox{\kern1.5pt\vrule width2.5pt height6pt depth1.5pt\kern1.5pt}}
\def\hang{\hangindent19pt}
\def\d@anger{\medbreak\begingroup\clubpenalty=10000
 \def\par{\endgraf\endgroup\medbreak} \noindent\hang\hangafter=-2
 \hbox to0pt{\hskip-\hangindent\dbend\hfill}\small}
\outer\def\danger{\d@anger}
\journal{Journal of Symbolic Computation}
\begin{document}

\begin{frontmatter}
\title{Ancilla-free synthesis of large reversible functions
       using binary decision diagrams}
\author[uni,dfki]{Mathias Soeken}
\author[uni]{Laura Tague}
\author[unb]{Gerhard W.~Dueck}
\author[uni,dfki]{Rolf Drechsler}
\address[uni]{Department of Mathematics and Computer Science, University of Bremen, Bremen, Germany}
\address[dfki]{Cyber-Physical Systems, DFKI GmbH, Bremen, Germany}
\address[unb]{Faculty of Computer Science, University of New Brunswick, Fredericton, Canada}

\begin{abstract}
  The synthesis of reversible functions has been an intensively studied research
  area in the last decade.  Since almost all proposed approaches rely on
  representations of exponential size (such as truth tables and permutations),
  they cannot be applied efficiently to reversible functions with more than~15
  variables.

  In this paper, we propose an ancilla-free synthesis approach based on Young
  subgroups using symbolic function representations that can efficiently be
  implemented with binary decision diagrams~(BDDs).  As a result, the algorithm
  not only allows to synthesize large reversible functions without adding extra
  lines, called ancilla, but also leads to significantly smaller circuits
  compared to existing approaches.
\end{abstract}

\begin{keyword}
Reversible functions \sep binary decision diagrams \sep synthesis
\end{keyword}
\end{frontmatter}

\section{Introduction and Background}
Given a bijective function~$f:\bbbb^n\to\bbbb^n$, also called a reversible
function, \emph{synthesis} describes the problem of determining a circuit
composed of reversible gates that realizes~$f$.  If this circuit consists of
exactly~$n$ signal lines, the synthesis is called \emph{ancilla-free}.  In the
last decade, ancilla-free synthesis approaches have been presented that start
from a reversible function represented e.g.~as truth tables~\citep{MMD:2003},
permutations~\citep{SPMH:2003}, or Reed-Muller spectras~\citep{MDM:2007}.  Since
all these representations are always of exponential size with respect to~$n$,
the respective algorithms do not scale well and are thus not efficiently
applicable to large reversible functions.  Reversible functions and circuits
play an important role in quantum computing~\citep{SM:2013} and low-power
computing~\citep{Landauer:1961,Vos:2010,BAP+:2012}.

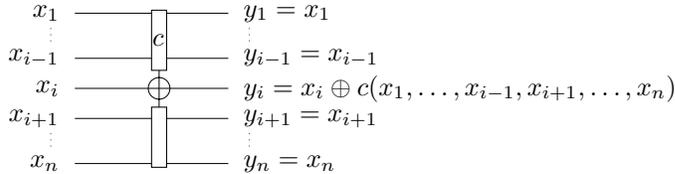
\begin{figure}[t]
  \centering
\begin{tikzpicture}[scale=.8]
\draw[line width=0.300000] (0.500000,2.850000) -- (3.050000,2.85000);
\draw (0.400000,2.850000) node [left] {$x_1$};
\draw (3.1500000,2.850000) node [right] {$y_1 = x_1$};
\draw[line width=0.300000,dotted] (0.100000,2.60000) -- (0.100000,2.30000);
\draw[line width=0.300000,dotted] (3.400000,2.60000) -- (3.400000,2.30000); 
\draw[line width=0.300000] (0.500000,2.100000) -- (3.05000,2.100000);
\draw (0.400000,2.100000) node [left] {$x_{i-1}$};
\draw (3.150000,2.100000) node [right] {$y_{i-1} = x_{i-1}$};
\draw[line width=0.300000] (0.500000,1.600000) -- (3.05000,1.600000);
\draw (0.400000,1.600000) node [left] {$x_i$};
\draw (3.150000,1.600000) node [right] {$y_i = x_i \oplus c(x_1, \dots{}, x_{i-1}, x_{i+1},\dots{}, x_n  ) $};
\draw[line width=0.300000] (0.500000,1.10000) -- (3.05000,1.10000);
\draw (0.400000,1.100000) node [left] {$x_{i+1}$};
\draw (3.150000,1.100000) node [right] {$y_{i+1}=x_{i+1}$};
\draw[line width=0.300000] (0.500000,0.350000) -- (3.05000,0.350000);
\draw (0.400000,0.350000) node [left] {$x_{n}$};
\draw (3.150000,0.350000) node [right] {$y_{n}=x_{n}$};
\draw[line width=0.300000,dotted] (0.100000,0.85000) -- (0.100000,0.55000);
\draw[line width=0.300000,dotted] (3.400000,0.85000) -- (3.400000,0.55000); 
\draw[fill=white,opacity=.98] (1.775,1.9) rectangle ++(0.25,1.0);
\draw (2.135000,2.400000) node [left] {$c$};
\draw[line width=0.3] (1.9,1.6) circle (0.18) (1.90,1.9) -- (1.90,.93);
\draw[fill=white,opacity=.98] (1.775,0.288) rectangle ++(0.25,1.0); 
\end{tikzpicture}
  \caption{Single-target gate}
  \label{fig:single-target-gate}
\end{figure}

One of these truth table based algorithms was presented in~\citep{VR:2008}
and uses \emph{single-target gates} as gate library.  A single-target
gate~$\T[c,i]:\bbbb^n\to\bbbb^n$ with~$\T[c,i](x_1,\dots, x_n)=(y_1,\dots,y_n)$
updates the value of the input variable~$x_i$ with respect to a Boolean
\emph{control function}~$c:\bbbb^{n-1}\to\bbbb$ that is defined on all other
variables.  That is, the target gate computes a new value at the \emph{target
  output} $y_i=x_i\oplus c(x_1,\dots,x_{i-1},x_{i+1},\dots,x_n)$ and leaves all
other output variables unaltered, i.e.~$y_j=x_j$ for~$j\neq i$.
Figure~\ref{fig:single-target-gate} shows the diagrammatic representation of a
single-target gate based on Feynman's notation~(\citeyear{Feynman:1985}).

The algorithm described in~\citep{VR:2008} makes use of the property that each
reversible function~$f:\bbbb^n\to\bbbb^n$ can be decomposed into three
$\bbbb^n\to\bbbb^n$ reversible functions
\begin{equation}
  \label{eq:decomposition}
  f=\T[l,i]\circ f'\circ \T[r,i]
\end{equation}
such that~$f'$ does not change in~$x_i$.  Based on the truth table
representation of~$f$, the algorithm determines two control functions~$l$
and~$r$ from which~$f'=T[l,i]\circ f\circ T[r,i]$ can be determined, since
single-target gates are self-inverse.  Applying the decomposition
in~\eqref{eq:decomposition} for each variable results in~$2n-1$ single-target
gates\footnote{For the last variable only one single-target gate is
  required. The middle part of~\eqref{eq:decomposition} degenerates to the
  identity.}  which composed as a circuit realize~$f$.  The algorithm always
traverses the whole truth table and is therefore exponential with respect to the
number of variables.  Consequently, it cannot efficiently be applied to large
functions.

\sn
The contributions described in this paper are as follows:
\begin{itemize}
\item We propose an algorithm based on a symbolic function representation to
  determine the control functions~$l$ and~$r$ from~\eqref{eq:decomposition}.
  The algorithm makes use of Boolean operations that can efficiently be
  implemented using binary decision diagrams~(BDDs).  Since BDDs allow for a
  more compact representation of many practical functions, the efficiency of the
  proposed algorithm is increased and can therefore be applied to larger
  functions.
\item In~\citep{VR:2008} only one variable ordering for the sequential
  application of~\eqref{eq:decomposition} has been considered, i.e.~the natural
  ordering $x_1,x_2,\dots,x_n$.  We investigate different heuristics to find
  variable orderings that allow for more compact synthesis results.
\item The original algorithm is not applicable to partial functions,
  i.e.~functions for which not all input/output mappings are specified.
  Our algorithm can be adjusted in order to support partial functions.
\item Finally, we provide open source implementations for both the original
  truth table based approach and our proposed approach.
\end{itemize}

\sn The majority of the proposed synthesis algorithms use Toffoli
gates~\citep{Toffoli:1980} as underlying gate library.  Our proposed method uses
single-target gates which can be transformed into a cascade of Toffoli gates.
However, since the lower bound for required Toffoli gates in a reversible
circuit is exponential with respect to the number of
lines~\citep{MDM:05,SAD:14}, the use of Toffoli gates in large reversible
circuits may not be convenient.  In contrast, the upper bound for required
single-target gates in a reversible circuit is linear as described above.  The
algorithms presented in this paper do not necessarily consider a specific target
technology. Hence, circuits obtained from our synthesis method are a reasonable
intermediate gate-level representation for large reversible
functions~\citep{ASTD:2014}.  It remains for future work to show how appropriate
technology mappings can be derived from single-target gates.  In the
experimental evaluation we are making use of straight-forward mapping
techniques.

\subsection{Related work}
In~\citep{SWH+:2012} another algorithm for synthesizing large reversible
functions without adding ancilla lines has been proposed that is based on
quantum multiple-valued decision diagrams~\citep{MT:2006}.  Compared to the
approach presented in the present paper, this algorithm uses a different method
to determine Toffoli gates.  Our experimental evaluations show that with our
technique smaller circuits with respect to the number of Toffoli gates and
quantum cost can be found.  The algorithm proposed in~\citep{WD:2009} also uses
BDDs to find reversible circuits.  However, the algorithm uses irreversible
functions as input and embeds them into reversible functions implicitly using a
hierarchical approach.  The approach produces an enormous number of additional
helper lines which are still far beyond from the theoretical upper bound.  In
contrast, the algorithms presented in this paper make use of a scalable exact
embedding algorithm~\citep{SWG+:14} and hence guarantee synthesis without adding
ancilla lines.

In~\citep{SZSS:10} a cycle-based approach to synthesize reversible functions is
proposed. In this approach a decomposition algorithm is first used to extract
building blocks from the input specification of the function. The input
specification is represented as a permutation and is decomposed into smaller
permutations, so called $k$-cycles, until building blocks can be used for
synthesis.  However, this method needs all input assignments to derive the
required $k$-cycles and as a consequence it has the same limitations as truth
table based approaches and cannot be applied to large functions. In our approach
the function is decomposed symbolically using the co-factor representation of
the function obtained from a BDD, which allows us to find cycles without
necessarily traversing the whole input space.  A similar approach has been
presented in~\citep{SSSZ:09} but shares the same limitations as it is also based
on an exponential function representation.

\subsection{Outline}
The remainder of the paper is organized as follows.  Preliminaries are given in
the next section and Sect.~\ref{sec:truth-table-based} reviews the truth table
based decomposition technique from \citep{VR:2008}.
Section~\ref{sec:general-idea} illustrates the general idea of the proposed
algorithm while Sect.~\ref{sec:char-repr-revers} shows special BDD operations
that are commonly used in the description of the algorithm in
Sect.~\ref{sec:algorithm}.  Optimization techniques targeting efficiency and the
number of gates are presented in Sects.~\ref{sec:incr-effic}
and~\ref{sec:ordering}, respectively.  The algorithm is extended for partial
functions in Sect.~\ref{sec:partial-functions}.  Experimental evaluations are
presented in Sect.~\ref{sec:exper-eval} before the paper is concluded in
Sect.~\ref{sec:conclusions}.

\section{Preliminaries}
In this section, we introduce notation for Boolean functions, binary decision
diagrams, reversible functions, and reversible circuits.
\subsection{Boolean Functions}
\label{sec:bool-revers-funct}
\begin{definition}[Boolean function]
Let~$\bbbb\eqdef\{0,1\}$ denote the \emph{Boolean values}.
Then we refer to
\begin{equation}
\bfunc_{n,m}\eqdef\{f\mid f\colon\bbbb^n\to\bbbb^m\}
\end{equation}
as the set of all \emph{Boolean multiple-output functions} with $n$~inputs and
$m$~outputs, where $m,n\geq1$.
\end{definition}

\sn We write~$\bfunc_n\eqdef\bfunc_{n,1}$ for each~$n\geq1$ and assume that each
$f\in\bfunc_n$ is represented by a propositional formula over the input
variables $x_1,\dots,x_n$.  Furthermore, we assume that each function~$f\in
\bfunc_{n,m}$ is represented as a tuple $f=(f_1,\dots,f_m)$
where~$f_i\in\bfunc_n$ for each~$i\in\{1,\dots,m\}$ and hence~$f(\vec
x)=(f_1(\vec x),\dots,f_m(\vec x))$ for each~$\vec x\in\bbbb^n$.  Output
variables of a function are denoted~$y_1,\dots,y_m$.

\begin{figure}
  \def\tabcolsep{1.15pt}
  \def\arraystretch{1}
  \centering
  \begin{tabular}{|CC|CC|} \hline
    x_1 & x_2 & y_1 & y_2 \\ \hline
      0 &   0 &   0 &   0 \\
      0 &   1 &   1 &   0 \\
      1 &   0 &   1 &   0 \\
      1 &   1 &   0 &   1 \\ \hline
  \end{tabular}
  \caption{Truth table for the function~$f(x_1,x_2)=(x_1\oplus x_2,x_1\land x_2)$}
  \label{fig:truth-table-half-adder}
\end{figure}

\begin{example}
  Figure~\ref{fig:truth-table-half-adder} shows the truth table of the Boolean
  multiple-output function
  \begin{equation}
    \label{eq:ha}
    f(x_1,x_2) = (x_1\oplus x_2,x_1\land x_2)
  \end{equation}
  that has two input and two output variables.  It represents the functionality
  of a half-adder.
\end{example}

Boolean matrices can be represented by a Boolean
function as described by the following definition.

\begin{definition}[Boolean matrix]
  \label{def:boolean-matrix}
  A Boolean-valued $2^n\times 2^m$ matrix~$A$ can be represented by a Boolean
  function~$f_A\in\bfunc_{m+n}$ with
  \begin{equation}
    \label{eq:boolean-matrix}
    f_A(c_1,\dots,c_m,r_1,\dots,r_n)\eqdef A_{r,c}
  \end{equation}
  where~$A_{r,c}$ denotes the element at row~$r=\sum_{i=1}^n2^{n-i}r_i$ and
  column~$c=\sum_{i=1}^m2^{m -1}c_i$.
\end{definition}

\begin{example}
  The Boolean matrix
  \[
    A=\begin{pmatrix} 1&0&0&0 \\[-5pt] 0&0&0&1 \\[-5pt] 0&1&1&0 \\[-5pt] 0&0&0&0 \end{pmatrix}
  \]
  is represented by the Boolean function
  \begin{equation}
    \label{eq:mat-func}
   f_A(c_1,c_2,r_1,r_2) = \bar c_1\bar c_2\bar r_1\bar r_2
                    \lor \bar c_1     c_2     r_1\bar r_2
                    \lor      c_1\bar c_2     r_1\bar r_2
                    \lor      c_1     c_2\bar r_1     r_2.
  \end{equation}
\end{example}

\begin{definition}[Co-factors]
  Given a Boolean function~$f\in\bfunc_n$ over the variables~$x_1,\dots,x_n$ and
  a variable~$x_i$ we define the \emph{positive
    co-factor}~$f_{x_i}\in\bfunc_{n-1}$ and the \emph{negative
    co-factor}~$f_{\bar x_i}\in\bfunc_{n-1}$ as
\begin{equation}
  \label{eq:pos-cofactor}
  f_{x_i}\eqdef f(x_1,\dots,x_{i-1},1,x_{i+1},\dots,x_n)
\end{equation}
and
\begin{equation}
  \label{eq:neg-cofactor}
  f_{\bar x_i}\eqdef f(x_1,\dots,x_{i-1},0,x_{i+1},\dots,x_n),
\end{equation}
respectively.
\end{definition}

\begin{definition}[Smoothing operator]
Given a Boolean function~$f\in\bfunc_n$ and an input~$x_i$ of~$f$, the
\emph{smoothing operator}~$\exists x_i$~\citep{TSL+:1990} is defined as the
disjunction of both co-factors, i.e.
\begin{equation}
  \label{eq:smoothing}
  \exists x_i\,f\eqdef f_{\bar x_i}\lor f_{x_i}
\end{equation}
We denote~$\exists\vec x\,f\eqdef\exists x_1\cdots\exists x_n\, f$.
\end{definition}

\sn That is, the smoothing operator returns a function that does not depend on
the variable~$x_i$ anymore.  Informally one can describe the smoothing operator
by replacing all occurrences of~$x_i$ and~$\bar x_i$ with don't cares.  The
smoothing operator can e.g.~be used for matrix multiplication as illustrated by
the following lemma.

\begin{lemma}[\citeauthor{TSL+:1990}, \citeyear{TSL+:1990}]
  \label{lem:multiply}
  Let~$A$ be a~$2^k\times 2^n$ Boolean matrix and~$B$ be a $2^m\times 2^k$
  Boolean matrix that are represented by Boolean
  functions~$f_A(x_1,\dots,x_n,y_1,\dots,y_k)$ and~$f_B(y_1,\dots,y_k,
  z_1,\dots,z_m)$ respectively.  Let
  \[ f_C(x_1,\dots,x_n,z_1,\dots,z_m)=\exists y_1\dots\exists y_n\,(f_A\land
  f_B). \] Then~$f_C$ is the Boolean function representation for the
  $2^m\times2^n$ Boolean matrix~$C=B\cdot A$ where~`$\cdot$' is the matrix
  multiplication in the Galois field~$\bbbf_2$.  \hfill$\Box$
\end{lemma}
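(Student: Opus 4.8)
The plan is to prove the identity entrywise. Since a Boolean function is determined by its values at all points of its domain, it suffices to fix a row index $z\in\{0,\dots,2^m-1\}$ and a column index $x\in\{0,\dots,2^n-1\}$, write $\vec x,\vec z$ for their binary encodings, and show that $f_C(\vec x,\vec z)=(B\cdot A)_{z,x}$.

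First I would unfold the iterated smoothing operator, smoothing out all $k$ shared variables $y_1,\dots,y_k$ (the displayed $\exists y_1\cdots\exists y_n$ is a typo for $\exists y_1\cdots\exists y_k$). A one-line induction on the number of smoothed variables, using $\exists x_i\,f\eqdef f_{\bar x_i}\lor f_{x_i}$, shows that $\exists y_1\cdots\exists y_k\,g=\bigvee_{\vec y\in\bbbb^k}g(\dots,\vec y,\dots)$ for any $g$; that is, smoothing a block of variables yields the disjunction over all $2^k$ assignments to that block. Applying this to $g=f_A\wedge f_B$ gives
\[ f_C(\vec x,\vec z)=\bigvee_{\vec y\in\bbbb^k}\bigl(f_A(\vec x,\vec y)\wedge f_B(\vec y,\vec z)\bigr). \]
Next I would substitute the matrix interpretation from Definition~\ref{def:boolean-matrix}. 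Matching the argument order in the lemma against the ``column variables first, then row variables'' convention of that definition, $f_A(\vec x,\vec y)=A_{y,x}$ with $y=\sum_i 2^{k-i}y_i$ the row and $x=\sum_i 2^{n-i}x_i$ the column, and likewise $f_B(\vec y,\vec z)=B_{z,y}$. Hence $f_C(\vec x,\vec z)=\bigvee_{y=0}^{2^k-1}\bigl(B_{z,y}\wedge A_{y,x}\bigr)$, which is exactly the $(z,x)$ entry of the product of $B$ and $A$ computed with logical ``and'' for multiplication and logical ``or'' for addition.

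The one point that needs care — and the step I would flag — is the identification of this Boolean product with the product over $\bbbf_2$, whose entry is $(B\cdot A)_{z,x}=\bigoplus_{y}B_{z,y}A_{y,x}$. Disjunction and exclusive-or of a family of bits agree precisely when at most one of the bits is $1$, and this is automatic for the matrices to which the lemma is applied in this paper, namely the (partial) permutation matrices coming from reversible functions: each row of $B$ and each column of $A$ then carries at most one $1$, so at most one product $B_{z,y}A_{y,x}$ is nonzero and $\bigvee_y$ coincides with $\bigoplus_y$ on every such sum. I would therefore either restrict the statement to such matrices or read ``$\cdot$'' as the Boolean matrix product; with that understood, the entrywise computation above closes the proof, and no other step changes.
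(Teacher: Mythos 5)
The paper does not prove this lemma at all: it is imported from Touati et al.\ (1990) and stated with an immediate $\Box$, so there is no in-paper argument to compare yours against. Your entrywise computation is the standard proof of this fact and it is correct: unfolding the smoothing operator over the shared block $y_1,\dots,y_k$ into a disjunction over all $2^k$ assignments, and then reading off $f_A(\vec x,\vec y)=A_{y,x}$ and $f_B(\vec y,\vec z)=B_{z,y}$ from the column-variables-first convention of Definition~\ref{def:boolean-matrix}, does give $f_C(\vec x,\vec z)=\bigvee_{y}\bigl(B_{z,y}\wedge A_{y,x}\bigr)$, the Boolean semiring product.

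Both of the issues you flag are real features of the statement as printed, and you resolve them the right way. The quantifier prefix $\exists y_1\cdots\exists y_n$ is indeed a typo for $\exists y_1\cdots\exists y_k$, since the shared variables being eliminated are $y_1,\dots,y_k$. And the identification of the OR-AND product with multiplication over $\bbbf_2$ does not hold for arbitrary Boolean matrices; it holds exactly when at most one summand $B_{z,y}A_{y,x}$ is nonzero for each $(z,x)$, which is guaranteed for the permutation matrices to which the paper actually applies the lemma (cf.\ Lemma~\ref{lem:char-perm-matrix} and the use of the smoothing operator for gate composition in Sect.~\ref{sec:char-repr-revers}). So your proof is complete for the intended use case, and your caveat correctly identifies the only sense in which the lemma as literally stated needs either a restriction to such matrices or a reading of `$\cdot$' as the Boolean matrix product.
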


\begin{definition}[ON-set and OFF-set]
  Given a Boolean function~$f\in\bfunc_n$ the sets
  \begin{equation}
    \onset(f)\eqdef\{\vec x\in\bbbb^n\mid f(\vec x)=1\}
    \quad\text{and}\quad
    \offset(f)\eqdef\{\vec x\in\bbbb^n\mid f(\vec x)=0\}
  \end{equation}
  are called \emph{ON-set} and \emph{OFF-set} of~$f$.  It can easily be seen
  that~$\onset(f)\cap\offset(f)=\emptyset$ and
  $\onset(f)\cup\offset(f)=\bbbb^n$.
\end{definition}

\begin{definition}[Characteristic function]
Given a function~$f=(f_1,\dots,f_m)\in\bfunc_{n,m}$ its \emph{characteristic
function}~$\chi_f\in\bfunc_{n+m}$ is defined as
\begin{equation}
  \label{eq:characteristic}
  \chi_f(\vec x,\vec y)\quad\eqdef\quad
\begin{cases} 1 & f(\vec x)=\vec y\\
0 & \text{otherwise}
\end{cases}
\end{equation}
for each~$\vec x\in\bbbb^n$ and each~$\vec y\in\bbbb^m$.
\end{definition}

\sn
The characteristic function allows to represent any multiple-output function as
a single-output function.
It can be computed from a multiple-output function by adding
to the variables $\{x_1,\ldots,x_n\}$ the
additional output variables~$\{y_1,\dots,y_m\}$:
\begin{equation}
  \label{eq:characteristic_computation}
\bigwedge_{i=1}^m(y_i\leftrightarrow f_i(x_1,\dots,x_n))
\end{equation}
In the remainder of this paper, we denote the characteristic function~$\chi_f$
of a function~$f$ by a capital letter, i.e.~$F$.

\begin{example}
  The function~$f_A(x_1,x_2,y_1,y_2)$ in~\eqref{eq:mat-func} is the
  characteristic function of \linebreak $f(x_1,x_2)=(y_1,y_2)$ in~\eqref{eq:ha}.
\end{example}

\subsection{Binary Decision Diagrams}
\label{sec:binary-decis-diagr}
Binary decision diagrams (BDD) are an established data structure for representing
Boo\-lean functions.  While the general concepts are briefly outlined in this
section, the reader is referred to the literature for a comprehensive
over\-view~\citep{Bryant:1986,Knuth:2011}.

Let~$\vec x=x_1,\dots,x_n$ be the variables of a Boolean function
$f\in\bfunc_n$.  A BDD representing the function $f$ is a directed acyclic graph
with non-terminal vertices~$N$ and terminal vertices~$T\subseteq\{\bddFalse,
\bddTrue\}$ where $N\cap T=\emptyset$ and~$T\neq\emptyset$.  Each non-terminal
vertex~$v\in N$ is labeled by a variable from $\vec x$ and has exactly two
children, $\low v $ and~$\high v $.  The directed edges to these children are
called \emph{low-edge} and \emph{high-edge} and are drawn dashed and solid,
respectively.  A non-terminal vertex $v$ labeled $x_i$ represents a function
denoted~$\sigma(v)$ given by the \emph{Shannon
  decomposition}~\citep{Shannon:1938}
\begin{equation}
\label{eq:shannon}
\sigma(v)=\bar x_i\sigma(\low v)+x_i\sigma(\high v)
\end{equation}
\noindent where $\sigma(\low v)$ and $\sigma(\high v)$ are the functions
represented by the children of $v$ with $\sigma(\bddFalse)=0$ and
$\sigma(\bddTrue)=1$.  The BDD has a single start vertex~$s$ with $\sigma(s)=f$.

A BDD is \emph{ordered} if the variables of the vertices on every path from the
start vertex to a terminal vertex adhere to a specific ordering.
Not all of the variables need to appear on a particular path, but a variable can
appear at most once on any path.
A BDD is \emph{reduced} if there are no two non-terminal vertices representing
the same function, hence the representation of common subfunctions is shared.
In the following only reduced, ordered BDDs are considered and for briefness
referred to as~BDDs.

Multiple-output functions can be represented by a single BDD that has more than
one start vertex.  Common subfunctions that can be shared among the functions
decrease the overall size of the BDD.  In fact, many practical Boolean functions
can efficiently be represented using BDDs, and efficient manipulations and
evaluations are possible.

\begin{figure}[t]
  \centering
  \subfloat[Shannon decomposition]{%
    \begin{tikzpicture}
      \node [draw,ellipse,inner sep=1.5pt] (x) {$x_i$};
      \draw[dashed] (x) to[bend right=10] +(225:.8cm) node[below] {$\low v$};
      \draw         (x) to[bend left=10]  +(315:.8cm) node[below] {$\high v$};
      \draw         (x) to                +(90:.4cm)  node[above] {$\sigma(v)=\bar x_i\sigma(\low v)+x_i\sigma(\high v)$};
    \end{tikzpicture}
  }
  \subfloat[BDD for function in Fig.~\ref{fig:truth-table-half-adder}]{%
  \begin{tikzpicture}[scale=.5]
    \begin{scope}[every node/.style={draw,ellipse,inner sep=1.5pt}]
      \node (1) at (41bp,162bp) [] {$x_1$};
      \node (3) at (27bp,90bp) [] {$x_2$};
      \node (2) at (99bp,90bp) [] {$x_2$};
      \node (4) at (140bp,162bp) [] {$x_1$};
    \end{scope}
    \begin{scope}[every node/.style={draw,inner sep=1.5pt}]
      \node (c1) at (27bp,18bp) [] {$\top$};
      \node (c0) at (99bp,18bp) [] {$\bot$};
    \end{scope}
    \draw [dashed] (4) ..controls (142.9bp,125.44bp) and (143.09bp,95.894bp)  .. (135bp,72bp) .. controls (130.58bp,58.935bp) and (121.99bp,46.092bp)  .. (c0);
    \draw [] (2) ..controls (72.924bp,63.649bp) and (56.827bp,47.999bp)  .. (c1);
    \draw [] (3) ..controls (53.076bp,63.649bp) and (69.173bp,47.999bp)  .. (c0);
    \draw [dashed] (2) ..controls (99bp,60.846bp) and (99bp,46.917bp)  .. (c0);
    \draw [dashed] (3) ..controls (27bp,60.846bp) and (27bp,46.917bp)  .. (c1);
    \draw [dashed] (1) ..controls (63.314bp,134.07bp) and (76.717bp,117.89bp)  .. (2);
    \draw [] (1) ..controls (35.442bp,133.21bp) and (32.637bp,119.18bp)  .. (3);
    \draw [] (4) ..controls (123.87bp,133.46bp) and (115.07bp,118.44bp)  .. (2);
    \draw (1.north) -- ++(up:10pt) node[above] {$y_1$};
    \draw (4.north) -- ++(up:10pt) node[above] {$y_2$};
    \coordinate (c) at (current bounding box.center);
    \draw[opacity=0] ([xshift=-130pt] c) -- ++(right:260pt);
  \end{tikzpicture}}
  \caption{Binary decision diagrams}
  \label{fig:bdds}
\end{figure}
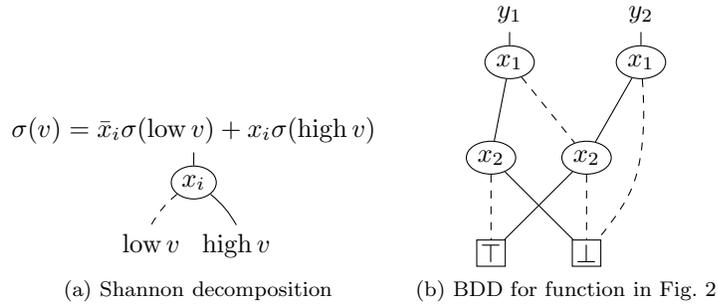

\begin{example}
  Figure~\ref{fig:bdds}(a) illustrates the Shannon decomposition
  from~\eqref{eq:shannon}.  A binary decision diagram for the function in
  Fig.~\ref{fig:truth-table-half-adder} is given in Fig.~\ref{fig:bdds}(b).
\end{example}

\subsection{Reversible Boolean Functions}
\begin{definition}[Reversible function]
A function~$f\in\bfunc_{n,m}$ is called \emph{reversible} if~$f$ is bijective,
otherwise it is called \emph{irreversible}.
Clearly, if~$f$ is reversible, then~$n=m$.
\end{definition}

\sn
A reversible function~$f\in\bfunc_{n,n}$ can also be represented by a
\emph{permutation} of~$\{0,1,\dots,2^n-1\}$, i.e.
\begin{equation}
  \label{eq:permutation}
  \pi_f\eqdef\left(\nat(f(0,\dots,0,0)),\dots,\nat(f(1,\dots,1,1))\right),
\end{equation}
where~$\nat:\bbbb^n\to\{0,1,\dots,2^n-1\}$ maps a bit-vector to its natural
number representation.
Further, $f$~can be represented as a~$2\times2$ Boolean \emph{permutation
matrix} $\Pi_f$ where
\begin{equation}
  \label{eq:permutation-matrix}
  \left(\Pi_f\right)_{r,c}\eqdef\left(\pi_f(c)\equiv r\right)
\end{equation}

\sn
The following lemma relates the characteristic function to the permutation
matrix of a reversible function.

\begin{lemma}
  \label{lem:char-perm-matrix}
  Let~$f\in\bfunc_{n,n}$ be a reversible function, $\Pi_f$ the permutation matrix
  of~$f$, and~$F$ the characteristic function of~$f$.
  Then~$\Pi_f$ is the matrix representation of~$F$ according to
  Definition~\ref{def:boolean-matrix}.
  We have~$|\onset(F)|=2^n$.
  %Let~$f\in\mathcal{B}_{n,m}$ and~$F=\chi_f$, then~$|\onset(F)|=2^n$.
\end{lemma}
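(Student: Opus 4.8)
The plan is to verify the claimed matrix identity entrywise and then read off the cardinality statement as an immediate corollary. Since $f$ is reversible we have $m=n$, so $F=\chi_f\in\bfunc_{2n}$ with argument tuple $(x_1,\dots,x_n,y_1,\dots,y_n)$; I will treat the $\vec x$-block as the column variables and the $\vec y$-block as the row variables in the sense of Definition~\ref{def:boolean-matrix}, which matches the order in which the arguments are listed ($m$ column variables first, then $n$ row variables).

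First I would fix $\vec x,\vec y\in\bbbb^n$ and set $c=\nat(\vec x)=\sum_{i=1}^n 2^{n-i}x_i$ and $r=\nat(\vec y)=\sum_{i=1}^n 2^{n-i}y_i$. By Definition~\ref{def:boolean-matrix} the $(r,c)$-entry of the matrix represented by $F$ is exactly $F(x_1,\dots,x_n,y_1,\dots,y_n)=\chi_f(\vec x,\vec y)$, and by the definition of the characteristic function this equals $1$ iff $f(\vec x)=\vec y$. On the other side, by~\eqref{eq:permutation} the permutation $\pi_f$ satisfies $\pi_f(\nat(\vec x))=\nat(f(\vec x))$, so $f(\vec x)=\vec y$ is equivalent to $\pi_f(c)=r$; and by~\eqref{eq:permutation-matrix} this is in turn equivalent to $(\Pi_f)_{r,c}=1$. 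Since $\nat$ is a bijection $\bbbb^n\to\{0,\dots,2^n-1\}$, letting $\vec x,\vec y$ range over $\bbbb^n$ makes $(r,c)$ range over all index pairs, so the two $2^n\times 2^n$ matrices agree in every entry, i.e.\ $\Pi_f$ is the matrix representation of $F$.

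For the last sentence I would note that $\onset(F)$ is precisely the set of pairs $(\vec x,\vec y)$ with $f(\vec x)=\vec y$; since $f$ is a (single-valued) function, each of the $2^n$ choices of $\vec x$ contributes exactly one such pair, whence $|\onset(F)|=2^n$. Equivalently, a permutation matrix has exactly one $1$ per column and there are $2^n$ columns, which by the identity just proved gives the same count.

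I do not expect a genuine obstacle here: the only thing that needs care is keeping the row/column bookkeeping straight — in particular that inputs correspond to columns and outputs to rows, consistently with the argument order in Definition~\ref{def:boolean-matrix} (note also the apparent typo $2^{m-1}$ for $2^{m-i}$ in that definition) — and observing that it is the bijectivity of $\nat$ that lets the entrywise check cover the whole matrix.
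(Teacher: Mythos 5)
Your verification is correct, and it is essentially the only proof available: the paper itself states Lemma~\ref{lem:char-perm-matrix} without proof, treating it as an immediate consequence of unfolding Definition~\ref{def:boolean-matrix}, \eqref{eq:characteristic}, \eqref{eq:permutation}, and \eqref{eq:permutation-matrix}, which is exactly what you do (including the correct identification of $\vec x$ with column indices and $\vec y$ with row indices, and the bijectivity of $\nat$). Your side remarks are also right: the exponent $2^{m-1}$ in Definition~\ref{def:boolean-matrix} is a typo for $2^{m-i}$, and the counting argument for $|\onset(F)|=2^n$ only uses that $f$ is total and single-valued, not its bijectivity.
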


\subsection{Reversible Circuits}
Reversible functions can be realized by reversible circuits that consist of at
least~$n$ lines and are constructed as cascades of reversible gates that belong
to a certain gate library.  Single-target gates have already been defined in the
introduction.  The most common gate library consists of Toffoli gates.

\begin{definition}[Toffoli gate]
  \emph{Mixed-polarity multiple-control Toffoli~(MPMCT) gates} are a subset of
  the single-target gates in which the control function~$c$ can be represented
  with one product term or~$c=1$.  We refer to MPMCT gates as Toffoli gates in
  the following.  The literals in the control function are also referred to as
  \emph{controls} or \emph{control lines}.
\end{definition}

\noindent
In~\citep{SPMH:2003}, it has been shown that any reversible
function~$f\in\bfunc_{n,n}$ can be realized by a reversible circuit with~$n$
lines when using Toffoli gates.  That is, it is not necessary to add any
temporary lines (ancilla) to realize the circuit.  Note that each single-target
gate can be expressed in terms of a cascade of Toffoli gates, which can be
obtained from an ESOP expression~\citep{Sas:93c}, respectively.  For drawing
circuits, we follow the established conventions of using the symbol $\oplus$ to
denote the target line, solid black circles to indicate positive control lines
and white circles to indicate negative control lines.

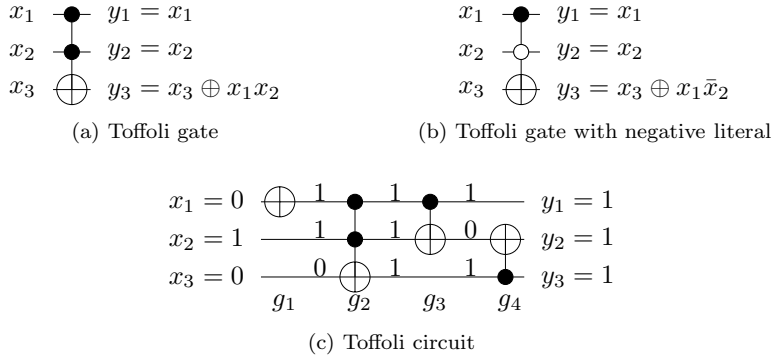
\begin{figure}[t]
\centering
  \subfloat[Toffoli gate]{%
    \begin{tikzpicture}
      \draw[line width=0.300000] (0.500000,1.250000) -- (1.000000,1.250000);
      \draw (0.400000,1.250000) node [left] {$x_1$};
      \draw (1.100000,1.250000) node [right] {$y_1=x_1$};
      \draw[line width=0.300000] (0.500000,0.750000) -- (1.000000,0.750000);
      \draw (0.400000,0.750000) node [left] {$x_2$};
      \draw (1.100000,0.750000) node [right] {$y_2=x_2$};
      \draw[line width=0.300000] (0.500000,0.250000) -- (1.000000,0.250000);
      \draw (0.400000,0.250000) node [left] {$x_3$};
      \draw (1.100000,0.250000) node [right] {$y_3=x_3\oplus x_1x_2$};
      \draw[line width=0.300000] (0.750000,0.250000) -- (0.750000,1.250000);
      \draw[fill] (0.750000,1.250000) circle (0.100000);
      \draw[fill] (0.750000,0.750000) circle (0.100000);
      \draw[line width=0.3] (0.75,0.25) circle (0.2) (0.75,0.05) -- (0.75,0.45);
    \end{tikzpicture}}\hfil
   \subfloat[Toffoli gate with negative literal]{% 
    \begin{tikzpicture}
      \draw[line width=0.300000] (0.500000,1.250000) -- (1.000000,1.250000);
      \draw (0.400000,1.250000) node [left] {$x_1$};
      \draw (1.100000,1.250000) node [right] {$y_1=x_1$};
      \draw[line width=0.300000] (0.500000,0.750000) -- (1.000000,0.750000);
      \draw (0.400000,0.750000) node [left] {$x_2$};
      \draw (1.100000,0.750000) node [right] {$y_2=x_2$};
      \draw[line width=0.300000] (0.500000,0.250000) -- (1.000000,0.250000);
      \draw (0.400000,0.250000) node [left] {$x_3$};
      \draw (1.100000,0.250000) node [right] {$y_3=x_3\oplus x_1\bar x_2$};
      \draw[line width=0.300000] (0.750000,0.250000) -- (0.750000,1.250000);
      \draw[fill] (0.750000,1.250000) circle (0.100000);
      \draw[fill=white] (0.750000,0.750000) circle (0.100000);
      \draw[line width=0.3] (0.75,0.25) circle (0.2) (0.75,0.05) -- (0.75,0.45);
      \coordinate (c) at (current bounding box.center);
      \draw[opacity=0,line width=0pt] ([xshift=-2.5cm] c) -- ++(right:5cm);
    \end{tikzpicture}}

  \subfloat[Toffoli circuit]{%
\begin{tikzpicture}
\draw[line width=0.300000] (0.500000,1.250000) -- (4.00000,1.250000);
\draw (0.400000,1.250000) node [left] {$x_1 = 0$};
\draw (4.100000,1.250000) node [right] {$y_1 =1$};
\draw[line width=0.300000] (0.500000,0.750000) -- (4.00000,0.750000);
\draw (0.400000,0.750000) node [left] {$x_2 = 1$};
\draw (4.100000,0.750000) node [right] {$y_2 = 1$};
\draw[line width=0.300000] (0.500000,0.250000) -- (4.00000,0.250000);
\draw (0.400000,0.250000) node [left] {$x_3 = 0$};
\draw (4.100000,0.250000) node [right] {$y_3 = 1$};
\draw[line width=0.300000] (0.750000,1.250000) -- (0.750000,1.250000);
\draw[line width=0.3] (0.75,1.25) circle (0.2) (0.75,1.05) -- (0.75,1.45);
\draw (1.100000,-0.100000) node [left] {$g_1$};
\draw (1.500000,1.380000) node [left] {$1$};
\draw (1.500000,0.880000) node [left] {$1$};
\draw (1.500000,0.380000) node [left] {$0$};
\draw[line width=0.300000] (1.750000,0.250000) -- (1.750000,1.250000);
\draw[fill] (1.750000,1.250000) circle (0.100000);
\draw[fill] (1.750000,0.750000) circle (0.100000);
\draw[line width=0.3] (1.75,0.25) circle (0.2) (1.75,0.05) -- (1.75,0.45);
\draw (2.100000,-0.100000) node [left] {$g_2$};
\draw (2.500000,1.380000) node [left] {$1$};
\draw (2.500000,0.880000) node [left] {$1$};
\draw (2.500000,0.380000) node [left] {$1$};
\draw[line width=0.300000] (2.750000,0.750000) -- (2.750000,1.250000);
\draw[fill] (2.750000,1.250000) circle (0.100000);
\draw[line width=0.3] (2.75,0.75) circle (0.2) (2.75,0.55) -- (2.75,0.95);
\draw (3.100000,-0.100000) node [left] {$g_3$};
\draw (3.500000,1.380000) node [left] {$1$};
\draw (3.500000,0.880000) node [left] {$0$};
\draw (3.500000,0.380000) node [left] {$1$};
\draw[line width=0.300000] (3.750000,0.250000) -- (3.750000,0.750000);
\draw[fill] (3.750000,0.250000) circle (0.100000);
\draw[line width=0.3] (3.75,0.75) circle (0.2) (3.75,0.55) -- (3.75,0.95);
\draw (4.100000,-0.100000) node [left] {$g_4$};
\end{tikzpicture}}
\caption{Reversible circuitry}
\label{fig:reversible-circuits}
\end{figure}

\begin{example}
  Figure~\ref{fig:reversible-circuits}(a) shows a Toffoli gate with two positive
  literals, while Fig.~\ref{fig:reversible-circuits}(b) shows a Toffoli gate
  with mixed polarities.  Figure~\ref{fig:reversible-circuits}(c) shows four
  Toffoli gates in a cascade forming a reversible circuit.  The annotated values
  demonstrate the computation of the gate for a given input assignment.
\end{example}

\section{Truth table based Decomposition}
\label{sec:truth-table-based}

\begin{figure}[t]
  \def\tabcolsep{1.15pt}
  \def\arraystretch{1}
  \newcommand{\pr}[2]{\tikz[remember picture,baseline=(n#2.base)]%
    \node[inner sep=1pt,outer sep=0pt,draw] (n#2) {#1};}
  \newcommand{\pn}[2]{\tikz[remember picture,baseline=(n#2.base)]%
    \node[inner sep=1pt,outer sep=0pt] (n#2) {#1};}
  \subfloat[Initial truth table]{%
    \begin{tabular}{|CCC|CCC|} \hline
      x_1&x_2&x_3&y_1&y_2&y_3 \\ \hline
      0 & 0 & 0 & 0 & 0 & 0 \\
      0 & 0 & 1 & 0 & 0 & 1 \\
      0 & 1 & 0 & 0 & 1 & 0 \\
      0 & 1 & 1 & 1 & 0 & 0 \\
      1 & 0 & 0 & 1 & 0 & 1 \\
      1 & 0 & 1 & 0 & 1 & 1 \\
      1 & 1 & 0 & 1 & 1 & 1 \\
      1 & 1 & 1 & 1 & 1 & 0 \\ \hline
    \end{tabular}
  }\hfill
  \subfloat[Copy second and third variable]{%
    \begin{tabular}{|CCC|CCC|CCC|CCC|} \hline
      x_1&x_2&x_3&x_1'&x_2&x_3&y_1'&y_2&y_3&y_1&y_2&y_3 \\ \hline
      0 & 0 & 0 &   & 0 & 0 &   & 0 & 0 & 0 & 0 & 0 \\
      0 & 0 & 1 &   & 0 & 1 &   & 0 & 1 & 0 & 0 & 1 \\
      0 & 1 & 0 &   & 1 & 0 &   & 1 & 0 & 0 & 1 & 0 \\
      0 & 1 & 1 &   & 1 & 1 &   & 0 & 0 & 1 & 0 & 0 \\
      1 & 0 & 0 &   & 0 & 0 &   & 0 & 1 & 1 & 0 & 1 \\
      1 & 0 & 1 &   & 0 & 1 &   & 1 & 1 & 0 & 1 & 1 \\
      1 & 1 & 0 &   & 1 & 0 &   & 1 & 1 & 1 & 1 & 1 \\
      1 & 1 & 1 &   & 1 & 1 &   & 1 & 0 & 1 & 1 & 0 \\ \hline
    \end{tabular}
  }\hfill
  \subfloat[Fill in first four numbers]{%
    \begin{tabular}{|CCC|CCC|CCC|CCC|} \hline
      x_1&x_2&x_3&x_1'&x_2&x_3&y_1'&y_2&y_3&y_1&y_2&y_3 \\ \hline
      0 & 0 & 0 & \pr1a & 0 & 0 &       & 0 & 0 & 0 & 0 & 0 \\
      0 & 0 & 1 &       & 0 & 1 & \pn1d & 0 & 1 & 0 & 0 & 1 \\
      0 & 1 & 0 &       & 1 & 0 &       & 1 & 0 & 0 & 1 & 0 \\
      0 & 1 & 1 &       & 1 & 1 &       & 0 & 0 & 1 & 0 & 0 \\
      1 & 0 & 0 & \pn0b & 0 & 0 & \pn0c & 0 & 1 & 1 & 0 & 1 \\
      1 & 0 & 1 &       & 0 & 1 &       & 1 & 1 & 0 & 1 & 1 \\
      1 & 1 & 0 &       & 1 & 0 &       & 1 & 1 & 1 & 1 & 1 \\
      1 & 1 & 1 &       & 1 & 1 &       & 1 & 0 & 1 & 1 & 0 \\ \hline
    \end{tabular}
    \begin{tikzpicture}[remember picture,overlay]
      \draw[-latex] (na.south) -- (nb.north);
      \draw[-latex] (nb.east) -- ++(right:3pt) to[out=0,in=180] ++(5pt,-5pt)
      -- ([xshift=-8pt,yshift=-5pt] nc.west) to[out=0,in=180] ++(5pt,5pt)
      -- (nc.west);
      \draw[-latex] (nc.north) -- (nd.south);
    \end{tikzpicture}
  }

  \subfloat[Fill in remaining numbers]{%
    \begin{tabular}{|CCC|CCC|CCC|CCC|} \hline
      x_1&x_2&x_3&x_1'&x_2&x_3&y_1'&y_2&y_3&y_1&y_2&y_3 \\ \hline
      0 & 0 & 0 & 1 & 0 & 0 & 1 & 0 & 0 & 0 & 0 & 0 \\
      0 & 0 & 1 & 1 & 0 & 1 & 1 & 0 & 1 & 0 & 0 & 1 \\
      0 & 1 & 0 & 0 & 1 & 0 & 0 & 1 & 0 & 0 & 1 & 0 \\
      0 & 1 & 1 & 0 & 1 & 1 & 0 & 0 & 0 & 1 & 0 & 0 \\
      1 & 0 & 0 & 0 & 0 & 0 & 0 & 0 & 1 & 1 & 0 & 1 \\
      1 & 0 & 1 & 0 & 0 & 1 & 0 & 1 & 1 & 0 & 1 & 1 \\
      1 & 1 & 0 & 1 & 1 & 0 & 1 & 1 & 1 & 1 & 1 & 1 \\
      1 & 1 & 1 & 1 & 1 & 1 & 1 & 1 & 0 & 1 & 1 & 0 \\ \hline
    \end{tabular}
  }\hfill
  \subfloat[Identifying cubes]{%
    \begin{tabular}{|CCC|CCC|CCC|CCC|} \hline
      x_1&x_2&x_3&x_1'&x_2&x_3&y_1'&y_2&y_3&y_1&y_2&y_3 \\ \hline
      0 & \pn0a & 0     & 1 & 0 & 0 & 1 & 0 & 0 & 0 & \pn0e & 0     \\
      0 & 0     & \pn1b & 1 & 0 & 1 & 1 & 0 & 1 & 0 &     0 & \pn1f \\
      0 & 1     & 0     & 0 & 1 & 0 & 0 & 1 & 0 & 0 &     1 & 0     \\
      0 & 1     & 1     & 0 & 1 & 1 & 0 & 0 & 0 & 1 & \pn0g & 0     \\
      1 & \pn0c & 0     & 0 & 0 & 0 & 0 & 0 & 1 & 1 &     0 & \pn1h \\
      1 & 0     & \pn1d & 0 & 0 & 1 & 0 & 1 & 1 & 0 &     1 & 1     \\
      1 & 1     & 0     & 1 & 1 & 0 & 1 & 1 & 1 & 1 &     1 & 1     \\
      1 & 1     & 1     & 1 & 1 & 1 & 1 & 1 & 0 & 1 &     1 & 0     \\ \hline
    \end{tabular}
    \begin{tikzpicture}[remember picture,overlay]
      \draw[rounded corners=2pt] (na.north west) rectangle (nb.south east);
      \draw[rounded corners=2pt] (nc.north west) rectangle (nd.south east);
      \draw[rounded corners=2pt] (ne.north west) rectangle (nf.south east);
      \draw[rounded corners=2pt] (ng.north west) rectangle (nh.south east);
    \end{tikzpicture}
  }

  \subfloat[Equalizing $x_2$ and $y_2$]{%
    \begin{tabular}{|CCC|CCC|CCC|CCC|CCC|CCC|} \hline
      x_1&x_2&x_3&x_1'&x_2&x_3&x_1'&x_2'&x_3&
      y_1'&y_2'&y_3&y_1'&y_2&y_3&y_1&y_2&y_3 \\ \hline
    0 & 0 & 0 & 1 & 0 & 0 & 1 & \pr0a & 0 & 1 & 0 & 0 & 1 & 0 & 0 & 0 & 0 & 0 \\
    0 & 0 & 1 & 1 & 0 & 1 & 1 &    0  & 1 & 1 & 0 & 1 & 1 & 0 & 1 & 0 & 0 & 1 \\
    0 & 1 & 0 & 0 & 1 & 0 & 0 & \pr0b & 0 & 0 & 0 & 0 & 0 & 1 & 0 & 0 & 1 & 0 \\
    0 & 1 & 1 & 0 & 1 & 1 & 0 &    1  & 1 & 0 & 1 & 0 & 0 & 0 & 0 & 1 & 0 & 0 \\
    1 & 0 & 0 & 0 & 0 & 0 & 0 &    1  & 0 & 0 & 1 & 1 & 0 & 0 & 1 & 1 & 0 & 1 \\
    1 & 0 & 1 & 0 & 0 & 1 & 0 &    0  & 1 & 0 & 0 & 1 & 0 & 1 & 1 & 0 & 1 & 1 \\
    1 & 1 & 0 & 1 & 1 & 0 & 1 &    1  & 0 & 1 & 1 & 1 & 1 & 1 & 1 & 1 & 1 & 1 \\
    1 & \pn1c & 1 & 1 & \pn1d & 1 & 1 &    \pn1e  & 1 & 1 & \pn1f & 0 & 1 & \pn1g & 0 & 1 & \pn1h & 0 \\
      \hline
      \noalign{\vspace{1cm}}
    \end{tabular}
    \begin{tikzpicture}[remember picture,overlay,font=\footnotesize]
      \draw[latex-latex] ([xshift=5pt,yshift=-3pt] nc.south)
                         to[out=270,in=270] node[below] {$\T[\bar x_2,1]$}
                         ([xshift=-5pt,yshift=-3pt] nd.south);
      \draw[latex-latex] ([xshift=5pt,yshift=-3pt] nd.south)
                         to[out=270,in=270] node[below] {$\T[\bar x_1'\bar x_3,2]$}
                         ([xshift=-5pt,yshift=-3pt] ne.south);
      \draw[latex-latex] ([xshift=5pt,yshift=-3pt] ne.south)
                         to[out=270,in=270] node[below] {$\T[x_2',3]$}
                         ([xshift=-5pt,yshift=-3pt] nf.south);
      \draw[latex-latex] ([xshift=5pt,yshift=-3pt] nf.south)
                         to[out=270,in=270] node[below] {$\T[\bar y_1',2]$}
                         ([xshift=-5pt,yshift=-3pt] ng.south);
      \draw[latex-latex] ([xshift=5pt,yshift=-3pt] ng.south)
                         to[out=270,in=270] node[below] {$\T[\bar y_2,1]$}
                         ([xshift=-5pt,yshift=-3pt] nh.south);
    \end{tikzpicture}
  }\hfill
  \subfloat[Resulting circuit]{%
    \begin{tikzpicture}
      \draw[line width=0.300000] (0.500000,1.250000) -- (4.600000,1.250000);
      \draw (0.400000,1.250000) node [left] {$x_1$};
      \draw (4.700000,1.250000) node [right] {$y_1$};
      \draw[line width=0.300000] (0.500000,0.750000) -- (4.600000,0.750000);
      \draw (0.400000,0.750000) node [left] {$x_2$};
      \draw (4.700000,0.750000) node [right] {$y_2$};
      \draw[line width=0.300000] (0.500000,0.250000) -- (4.600000,0.250000);
      \draw (0.400000,0.250000) node [left] {$x_3$};
      \draw (4.700000,0.250000) node [right] {$y_3$};
      \draw[line width=0.300000] (0.750000,0.750000) -- (0.750000,1.250000);
      \draw[fill=white] (0.750000,0.750000) circle (0.100000);
      \draw[line width=0.3] (0.75,1.25) circle (0.2) (0.75,1.05) -- (0.75,1.45);
      \draw[line width=0.300000] (1.650000,0.250000) -- (1.650000,1.250000);
      \draw[fill=white] (1.650000,1.250000) circle (0.100000);
      \draw[fill=white] (1.650000,0.250000) circle (0.100000);
      \draw[line width=0.3] (1.65,0.75) circle (0.2) (1.65,0.55) -- (1.65,0.95);
      \draw[line width=0.300000] (2.550000,0.250000) -- (2.550000,0.750000);
      \draw[fill] (2.550000,0.750000) circle (0.100000);
      \draw[line width=0.3] (2.55,0.25) circle (0.2) (2.55,0.05) -- (2.55,0.45);
      \draw[line width=0.300000] (3.450000,0.750000) -- (3.450000,1.250000);
      \draw[fill=white] (3.450000,1.250000) circle (0.100000);
      \draw[line width=0.3] (3.45,0.75) circle (0.2) (3.45,0.55) -- (3.45,0.95);
      \draw[line width=0.300000] (4.350000,0.750000) -- (4.350000,1.250000);
      \draw[fill=white] (4.350000,0.750000) circle (0.100000);
      \draw[line width=0.3] (4.35,1.25) circle (0.2) (4.35,1.05) -- (4.35,1.45);
      \node[font=\footnotesize] at (1.2,1.41) {$x_1'$};
      \node[font=\footnotesize] at (2.1,0.92) {$x_2'$};
      \node[font=\footnotesize] at (3,0.92) {$y_2'$};
      \node[font=\footnotesize] at (3.9,1.41) {$y_1'$};
    \end{tikzpicture}
  }
  \caption{Truth table based decomposition}
  \label{fig:devos-example}
\end{figure}
\label{sec:devos}
In~\citep{VR:2008} an algorithm was presented that determines control
functions~$l$ and~$r$ as in~\eqref{eq:decomposition} to decompose a reversible
function~$f:\bbbb^n\to\bbbb^n$ that is represented as a truth table.  When
applying the algorithm for each circuit line one can obtain a reversible circuit
composed of single-target gates that realizes~$f$.  The underlying theory for
the algorithm is based on Young subgroups, however, for the scope of this paper
it is sufficient to explain the algorithm by means of an example.  Consider the
truth table in Fig.~\ref{fig:devos-example}(a) that is described by the
variables~$x_1,x_2,x_3$ and $y_1,y_2,y_3$.

First, the aim is to determine two control functions~$l$ and~$r$ for
single-target gates that act on the first circuit line which maps~$x_1$
to~$y_1$.  All values of the variables~$x_2,x_3$ and~$y_2,y_3$ are copied
into a new truth table that will eventually represent~$f'$ and is illustrated by
the two inner blocks in Fig.~\ref{fig:devos-example}(b).  Afterwards the values
of~$x_1'$ and~$y_1'$ are assigned new values such that the following two
constraints are met:
\begin{enumerate}
\item The functions defined by the truth table described by the first two blocks
  and by the last two blocks must be reversible, i.e.~each pattern must occur
  exactly once.
\item The function~$f'$, defined by the inner two blocks, must ensure that
  $x_1'=y_1'$.
\end{enumerate}
The reversibility of~$f'$ follows from both constraints and since both~$x_2,x_3$
and~$y_2,y_3$ were copied.  It also follows that the functions described by the
first and last two blocks can be described by single-target gates.

The values for~$x_1'$ and~$y_1'$ can be filled using the following procedure.
We start by inserting a value at the first empty cell of~$x_1'$, in our example
we choose the value~$1$ indicated by a rectangle in
Fig.~\ref{fig:devos-example}(c).  To satisfy the first constraint, next a~$0$ is
inserted where the pattern repeats for~$x_2$ and~$x_3$.  In order to satisfy the
second constraint, the~$0$ has to be assigned to~$y_1'$ in the same row.
Consequently, a~$1$ has to be filled for~$y_1'$ where the pattern repeats.  This
procedure is repeated until a cell is met that has already been filled.  If
there are still empty cells, the overall procedure is repeated from the
beginning.  All numbers have been filled in Fig.~\ref{fig:devos-example}(d).

\begin{remark}
\label{rem:freedom}
The first entry position in the procedure above can freely be chosen.  It can
easily be seen, that the corresponding cycle will not change, it is just
\emph{entered} at a different position.  Also the value of the first entry can
be changed without violating the constraints, if the other values are adjusted
accordingly.  Hence, one obtains up to~$2^k$ different functions for~$l$ and~$r$
where~$k$ is the number of cycles.
\end{remark}

\sn
After all values for~$x_1'$ and~$y_1'$ have been assigned, the control functions
can be determined as illustrated by means of Fig.~\ref{fig:devos-example}(e).
The first control function~$l$ can be determined from the first two blocks by
inspecting the assignments for~$x_2$ and~$x_3$ where~$x_1$ and~$x_1'$ are
different.  These are~$00$ and~$01$, hence~$l=\bar x_2\bar x_3\lor \bar
x_2x_3=\bar x_2$.  This can be done analogously for~$r$ and one obtains~$r=\bar
y_2$.

This process is called~\emph{variable equalization} in the following.  The two
blocks in the middle of Fig.~\ref{fig:devos-example}(e) are a truth table
representation of~$f'$ which can now be used to obtain two further single-target
gates by equalizing the variables~$x_2$ and~$y_2$.  This process is illustrated
in Fig.~\ref{fig:devos-example}(f) where the starting points for filling numbers
are again indicated by rectangles.  The control functions for the single-target
gates to equalize~$x_2$ and~$y_2$ are~$\bar x_1'\bar x_3$ and~$\bar y_1'$,
respectively.  Note that the last single-target gate to equalize~$x_3$
and~$y_3$ can now directly be read from the truth table described by the middle
two blocks in Fig.~\ref{fig:devos-example}(f), since the values in the first and
second variables are already equal.  The control function of the last
single-target gate is~$\bar x_2'$.

When putting all single-target gates together, one obtains a circuit that is
depicted in Fig.~\ref{fig:devos-example}(g).  Note that in this particular case,
all single-target gates are also Toffoli gates.

\section{General Idea}
\label{sec:general-idea}
This section illustrates how the decomposition procedure can be described
symbolically with Boolean function operations that can efficiently be
implemented using BDDs.  The problem with the current synthesis approach is the
underlying truth table representation which grows exponentially with respect to
the number of variables.  As a result, the processing time of the algorithm
depends essentially only on the size of the function and therefore impedes an
efficient processing of functions with many variables.

Aiming for reducing the complexity of the algorithm, we propose to implement the
algorithm based on BDDs, which provide a compact representation for many
reversible functions of practical interest.
Using BDDs allows for reducing the required memory as a first consequence.
Moreover, our proposed algorithm for decomposing a function according
to~\eqref{eq:decomposition} is constructed in a manner such that it does not
necessarily traverse the whole truth table; therefore reducing the run-time.

\begin{figure}[t]
\centering
\begin{tabular}{ccccc}
\hline
\multicolumn{1}{c}{} & $F_{n_1}$ & $F_{p'_1}$ & $F_{n'_1}$ & $F_{p_1}$ \\
\multicolumn{1}{c}{$x_1 \mapsto y_1$ } & $0 \mapsto 0$ & $1 \mapsto 0$  & $0 \mapsto 1$ & $1 \mapsto 1$  \\ \hline
\parbox[t]{2mm}{\multirow{4}{*}{\rotatebox[origin=c]{90}{Inputs}}} & $x_2x_3$  & $x_2 x_3$  & $x_2 x_3$ & $x_2 x_3$ \\ \cline{ 2- 5}\noalign{\vspace{2pt}}
\multicolumn{ 1}{c}{} & 00 & \tikz[baseline=(n.base)] \node[draw,inner sep=2pt] (n) {01}; & 11 & 00 \\
\multicolumn{ 1}{c}{} & 01 &    &    & \tikz[baseline=(n.base)] \node[draw,inner sep=2pt,rounded corners=2pt] (n) {10}; \\
\multicolumn{ 1}{c}{} & 10 &    &    & 11 \\ \hline
\parbox[t]{2mm}{\multirow{4}{*}{\rotatebox[origin=c]{90}{Outputs}}} & $y_2 y_3$  & $y_2 y_3$  & $y_2 y_3$  & $y_2 y_3$  \\ \cline{ 2- 5}\noalign{\vspace{2pt}}
\multicolumn{ 1}{c}{} & 00 & \tikz[baseline=(n.base)] \node[draw,inner sep=2pt] (n) {11}; & 00 & 01 \\
\multicolumn{ 1}{c}{} & 01 &    &    & \tikz[baseline=(n.base)] \node[draw,inner sep=2pt,rounded corners=2pt] (n) {11}; \\
\multicolumn{ 1}{c}{} & 10 &    &    & 10 \\ \hline
\end{tabular}
\caption{Co-factor table representation}
\label{fig:alt-tt}
\end{figure}

In order to better illustrate our algorithm we represent the reversible
function~$f$ by its \emph{co-factor table}.  This table represents~$f$ by means
of the co-factors of the characteristic function~$F$ of~$f$.  For each variable
$x_i$ of the function there are four co-factors
\begin{itemize}
\item $F_{n_i}=(F_{\bar x_i})_{\bar y_i}$ where~$x_i=0$ maps
  to~$y_i=0$ in~$f$,
\item $F_{p'_i}=(F_{x_i})_{\bar y_i}$ where $x_i=1$ maps to~$y_i=0$ in~$f$,
\item $F_{n'_i}=(F_{\bar x_i})_{y_i}$ where~$x_i=0$ maps to~$y_i=1$ in~$f$, and
\item $F_{p_i}=(F_{x_i})_{y_i}$ where~$x_i=1$ maps to~$y_i=1$ in~$f$.
\end{itemize}

\begin{remark}
  Note that this co-factor table representation is used \emph{only} for
  illustrative purposes in this paper.  In the implementation for the algorithm
  the reversible function is stored using BDDs and therefore in a more compact
  way.
\end{remark}

\begin{example}
  The co-factor table of the function in Fig.~\ref{fig:devos-example}(a) is
  depicted in Fig.~\ref{fig:alt-tt} and shows all four co-factors with respect
  to the decomposition of~$x_1$ to~$y_1$ and separates each of the eight
  function entries by its input and output assignments.  Hence, the order of the
  entries in the co-factor table matters since input and output assignments are
  linked according to their position.  As an example, the entry highlighted
  using a rectangle corresponds to the mapping~$101\mapsto 011$ and the entry
  highlighted using a rounded rectangle corresponds to the mapping~$110\mapsto
  111$.

  We can now describe the principle of the algorithm based on the co-factor
  table representation.  The aim of the algorithm is to find the two control
  functions~$l$ and~$r$ that equalize the variables~$x_1$ and~$y_1$.  As a
  result, after applying single-target gates controlled by these functions, no
  pattern remains in the middle part of the co-factor table, i.e.~in
  columns~$F_{p'_1}$ and~$F_{n'_1}$.  In order to understand the \emph{moving}
  of the entries with respect to the control functions, consider the following
  scenario.  If $l(0,1)=1$ and~$r(1,1)=0$ (entry highlighted by an rectangle in
  Fig.~\ref{fig:alt-tt}), the entry \emph{moves} to the column~$F_{n_1}$,
  because the input pattern is inverted at~$x_1$, therefore changing the pattern
  to $001\mapsto011$.
\end{example}

\sn Figure~\ref{fig:movings} shows all possible combinations of the evaluations
of~$l$ and~$r$ and their effect on the entries in~$F_{n_i}$, $F_{p'_i}$,
$F_{n'_i}$, and~$F_{p_i}$ referred to as~$n$, $p'$, $n'$, and~$p$, respectively.

\begin{figure}[b]
\centering
\newcommand{\markinner}[2]{\tikz[remember picture,baseline=(#1.base)] \node[inner sep=0pt,outer sep=0pt] (#1) {#2};}
\begin{tabular}{c|c|c|c|c}
\multicolumn{ 1}{c|}{} & $\bar l(\vec x)\land\bar r(\vec y)$ & $l(\vec x)\land\bar r(\vec y)$ & $\bar l(\vec x)\land r(\vec y)$ & $l(\vec x)\land r(\vec y)$ \\ \hline
\multicolumn{ 1}{c|}{$n$}  & \tikz[baseline=(n.base)] \node[draw,inner sep=2pt] (n) {$n$};  & $p'$ & $n'$ & \tikz[baseline=(n.base)] \node[draw,inner sep=2pt] (n) {$p$};\\
\multicolumn{ 1}{c|}{$p'$} & $p'$ & \markinner{i1}{$n$}  & \markinner{i2}{$p$}  & $n'$ \\
\multicolumn{ 1}{c|}{$n'$} & $n'$ & \markinner{i3}{$p$}  & \markinner{i4}{$n$}  & $p'$ \\
\multicolumn{ 1}{c|}{$p$}  & \tikz[baseline=(n.base)] \node[draw,inner sep=2pt] (n) {$p$};  & $n'$ & $p'$ & \tikz[baseline=(n.base)] \node[draw,inner sep=2pt] (n) {$n$}; \\
\end{tabular}
\begin{tikzpicture}[remember picture,overlay]
  \node[inner sep=1pt,fit=(i1) (i2) (i3) (i4),draw,rounded corners=2pt] {};
\end{tikzpicture}
\caption{Movings of entries with respect to~$l$ and~$r$}
\label{fig:movings}
\end{figure}

\begin{example}
  Hence, when moving the highlighted entry as described in the previous example,
  i.e.~$l(0,1)=1$, also the second pattern in column~$F_{n_1}$,
  i.e.~$001\mapsto001$ will move to column~$F_{p'_1}$ since~$r(0,1)=0$ and each
  input and output subpattern occurs twice in~$f$.
\end{example}

\sn
In order to move all entries to the outer columns, the following two conditions
must hold:
\begin{enumerate}
\item For the outer entries either none or both control functions must evaluate
to true, hence the entry remains at the same position or is transferred
to the other side.
\item For the inner entries exactly one of the control functions must evaluate
to true.
\end{enumerate}

\sn It is not obvious how to determine~$l$ and~$r$ immediately from the given
co-factors.  In the following, we present a method which moves one entry from
the middle blocks to the outside.  Repeating this process until eventually all
entries from the middle blocks have been moved to the outside will guarantee an
equalization of the considered variable.

\begin{example}
  We explain the method based on the co-factor table as previously presented.
  As pattern from the middle blocks we pick~$101\mapsto011$, which is located in
  column~$F_{p'_1}$.  To determine~$l(x_2,x_3)$ and~$r(y_2,y_3)$, we first set
  $l\leftarrow\bar x_2x_3$ according to the assignment to~$x_2$ and~$x_3$
  and~$r\leftarrow\bot$.  This will imply the following steps:

  \begin{itemize}
  \item Because~$l(01)=1$, the highlighted pattern in the column~$F_{p'_1}$ is
    moved to~$F_{n_1}$.
  \item The original mapping~$101 \mapsto011$ changes to~$001 \mapsto 011$,
    because the input at~$x_1$ is inverted.  However, at the same time also the
    pattern~$001\mapsto 001$, currently in column~$F_{n_1}$, would change
    to~$101\mapsto 001$ and be passed to column~$F_{p'}$.  We call this pattern
    the \emph{implied pattern}.
  \item In order to avoid this, one must additionally set~$r\leftarrow\bar
    y_2y_3$ thereby passing the implied pattern over to~$F_{p_1}$ instead, since
    now both~$l$ and~$r$ evaluate to true for this pattern.
\end{itemize}

\begin{figure}[t]
  \centering
  \let\da=\Downarrow
  \begin{tabular}{|>{\bf}c|CCC|>{\quad}L|>{\quad}L|}
\hline
(1) & 101 & \mapsto & 011 & l\gets\bar x_2x_3                     & d=0 \\
    & \da &         &     &                                       &     \\
(2) & 001 & \mapsto & 001 & r\gets\bar y_2y_3                     & d=1 \\
    &     &         & \da &                                       &     \\
(3) & 100 & \mapsto & 101 & l\gets l\lor\bar x_2\bar x_3=\bar x_2 & d=0 \\
    & \da &         &     &                                       &     \\
(4) & 000 & \mapsto & 000 & r\gets r\lor\bar y_2\bar y_3=\bar y_2 & d=1 \\
    &     &         & \da &                                       &     \\
(5) & 011 & \mapsto & 100 &                                       &     \\
\hline
  \end{tabular}
  \caption{Resolving a cycle}
  \label{fig:resolve}
\end{figure}

The assignment of~$r$ will in turn affect another pattern, hence this process is
repeated until the implied pattern is in $F_{n'_1}$.  Figure~\ref{fig:resolve}
illustrates the process for this example.  The initial pattern is given in the
first row and all implied patterns are given in the rows below.  Input patterns
are shown on the left and output patterns are shown on the right.  The
implication of a pattern is indicated by~`$\Downarrow$'.  The last column shows
a Boolean variable~$d$ which indicates whether function~$l$ or~$r$ is updated.
It is used in the algorithm, which is described in the
Sect.~\ref{sec:algorithm}.
\end{example}

\begin{remark}
  In this example only one cycle had to be resolved and therefore the control
  functions~$l$ and~$r$ could directly be determined.  If more than one cycle
  needs to be resolved, the illustrated process is repeated and all obtained
  functions are composed using the `$\oplus$' operation for the overall control
  function.  This procedure is described in Sect.~\ref{sec:algorithm} in more
  detail.
\end{remark}

\section{Characteristic Representation of Reversible Functions}
\label{sec:char-repr-revers}
The algorithm that is proposed in the following section heavily relies on the
characteristic function representation~$F\in\bfunc_{2n}$ of the given reversible
function~$f\in\bfunc_{n,n}$ for which a circuit should be determined.  The BDD
representation of~$F$ allows for efficient function manipulation and evaluation,
which is described in more detail in this section.

\begin{example}
  Figure~\ref{fig:char-bdd} illustrates how the co-factors of a characteristic
  function can be obtained from its BDD representation.  Further, a BDD
  representing the characteristic function of
  \[ f(x_1,x_2)=(x_1\oplus x_2,x_1\land x_2) \]
  is depicted in Fig.~\ref{fig:char-bdd}(b).
\end{example}

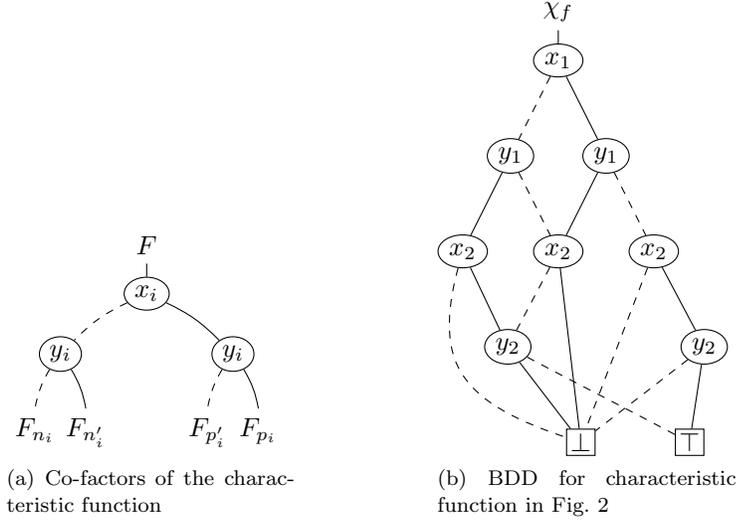
\begin{figure}[t]
  \centering
  \subfloat[Co-factors of the characteristic function]{%
    \begin{tikzpicture}
      \begin{scope}[every node/.style={draw,ellipse,inner sep=1.5pt}]
        \node (x) {$x_i$};
        \node (yl) at (215:1.4cm) {$y_i$};
        \node (yr) at (325:1.4cm) {$y_i$};
      \end{scope}
      \draw[dashed] (x) to[bend right=10] (yl);
      \draw         (x) to[bend left=10] (yr);
      \draw[dashed] (yl) to[bend right=10] +(245:.8cm) node[below] {$F_{n_i}$};
      \draw         (yl) to[bend left=10]  +(295:.8cm) node[below] {$F_{n'_i}$};
      \draw[dashed] (yr) to[bend right=10] +(245:.8cm) node[below] {$F_{p'_i}$};
      \draw         (yr) to[bend left=10]  +(295:.8cm) node[below] {$F_{p_i}$};
      \draw         (x) to                +(90:.4cm)  node[above] {$F$};
    \end{tikzpicture}
  }\hfil
  \subfloat[BDD for characteristic function in Fig.~\ref{fig:truth-table-half-adder}]{%
    \begin{tikzpicture}[scale=.5]
      \begin{scope}[every node/.style={draw,ellipse,inner sep=1.5pt}]
        \node (y11) at (63bp,234bp) [] {$y_1$};
        \node (y21) at (61bp,90bp) [] {$y_2$};
        \node (x11) at (99bp,306bp) [] {$x_1$};
        \node (y22) at (209bp,90bp) [] {$y_2$};
        \node (y12) at (135bp,234bp) [] {$y_1$};
        \node (x21) at (99bp,162bp) [] {$x_2$};
        \node (x23) at (171bp,162bp) [] {$x_2$};
        \node (x22) at (27bp,162bp) [] {$x_2$};
      \end{scope}
      \begin{scope}[every node/.style={draw,inner sep=1.5pt}]
        \node (c0) at (116bp,18bp) [] {$\bot$};
        \node (c1) at (198bp,18bp) [] {$\top$};
      \end{scope}
      \draw [] (y12) ..controls (120.83bp,205.46bp) and (113.11bp,190.44bp)  .. (x21);
      \draw [dashed] (x23) ..controls (154.1bp,117.36bp) and (133.4bp,63.913bp)  .. (c0);
      \draw [] (y11) ..controls (48.835bp,205.46bp) and (41.11bp,190.44bp)  .. (x22);
      \draw [] (x23) ..controls (185.95bp,133.46bp) and (194.11bp,118.44bp)  .. (y22);
      \draw [dashed] (y22) ..controls (176.56bp,64.586bp) and (154.97bp,48.33bp)  .. (c0);
      \draw [] (x11) ..controls (113.17bp,277.46bp) and (120.89bp,262.44bp)  .. (y12);
      \draw [dashed] (y11) ..controls (77.165bp,205.46bp) and (84.89bp,190.44bp)  .. (x21);
      \draw [dashed] (y21) ..controls (105.61bp,66.206bp) and (144.67bp,46.25bp)  .. (c1);
      \draw [dashed] (x22) ..controls (16.907bp,125.43bp) and (12.137bp,94.377bp)  .. (25bp,72bp) .. controls (38.576bp,48.382bp) and (67.176bp,34.303bp)  .. (c0);
      \draw [] (y21) ..controls (81.654bp,62.713bp) and (93.487bp,47.653bp)  .. (c0);
      \draw [] (y22) ..controls (204.63bp,61.211bp) and (202.43bp,47.183bp)  .. (c1);
      \draw [] (x21) ..controls (104.32bp,116.58bp) and (110.67bp,63.52bp)  .. (c0);
      \draw [] (x22) ..controls (40.3bp,133.62bp) and (47.466bp,118.87bp)  .. (y21);
      \draw [dashed] (y12) ..controls (149.17bp,205.46bp) and (156.89bp,190.44bp)  .. (x23);
      \draw [dashed] (x21) ..controls (84.048bp,133.46bp) and (75.894bp,118.44bp)  .. (y21);
      \draw [dashed] (x11) ..controls (84.835bp,277.46bp) and (77.11bp,262.44bp)  .. (y11);
      \draw (x11.north) -- ++(up:10pt) node[above] {$\chi_f$};
    \end{tikzpicture}
  }
  \caption{Characteristic representation of reversible functions}
  \label{fig:char-bdd}
\end{figure}

As pointed out in Lemma~\ref{lem:char-perm-matrix} the characteristic function
of a reversible function corresponds to its Boolean matrix representation.
Since gate application of reversible functions corresponds to matrix
multiplication in the permutation matrix representation,
Lemma~\ref{lem:multiply} tells us that this operation can be carried out
efficiently using BDD manipulation.

Since a reversible function is a 1-to-1 mapping of inputs to outputs, each path
to~\bddTrue~in the BDD for~$F$ visits all variables.  Hence, picking one such
path~$c$ represents one input/output mapping of~$f$ and has a valuation for each
input variable and output variable.  In order to obtain the input pattern
from~$c$ one can remove all outputs using the smoothing operator.  This can be
done analogously to get the output pattern.  In summary we have
\begin{equation}
   \label{eq:smoothing-in-char-f}
   \vec x=\exists\vec y\,c\text{ and }\vec y=\exists\vec x\,c
   \qquad\text{if, and only if}\qquad
   f(\vec x)=\vec y.
\end{equation}

\begin{example}
  If $f\in\bfunc_{3,3}$ is a reversible function with $f(1,0,1)=(1,1,0)$ then
  the pattern~$c=x_1\bar x_2x_3y_1y_2\bar y_3$ is contained in~$F$.  We have
  \[
  \exists\vec y\,c=x_1\bar x_2x_3
  \quad\text{and}\quad
  \exists\vec x\,c=y_1y_2\bar y_3.
  \]
\end{example}

\sn Besides co-factors this operation will be our most powerful tool in order to
describe the synthesis algorithm.  The smoothing operator can also be used
on~$F$ to check whether~$f$ is reversible.  We have: $f$ is reversible, if and
only if
\begin{equation}
  \label{eq:reversibility-char}
  |\onset(F)|=2^n
  \quad\text{and}\quad
  \exists\vec y\,F=\top
  \quad\text{and}\quad
  \exists\vec x\,F=\top,
\end{equation}
i.e.~$F$ has~$2^n$ paths to~\bddTrue, and contains all input and output
patterns.

\section{The Algorithm}
\label{sec:algorithm}
Based on the procedure exemplarily illustrated in the previous section, we first
explain the basic algorithm to decompose a function as
in~\eqref{eq:decomposition} based on a symbolic function representation.
Afterwards, different techniques for optimization are illustrated in this and
successive sections.

\algbegin Algorithm D (Symbolic Decomposition).  Given a reversible
function~$f:\bbbb^n\to\bbbb^n$ that is symbolically represented by its
characteristic function~\[F(\vec x,\vec y)=F(x_1,\dots,x_n,y_1,\dots,y_n)\]
using BDDs and a variable~$x_i\in \{x_1,\dots,x_n\}$, this algorithm finds two
control functions~$l$ and~$r$ from which a reversible
function~$f'=\T[l,i]\circ f\circ\T[r,i]$ can be obtained that does not
change in~$x_i$.

\algstep D1. [Initialization.] Set~$l\leftarrow\bot$ and~$r\leftarrow\bot$.

\algstep D2. [Resolved all cycles?]
If~$F(\vec x,\vec y)\land(x_i\oplus y_i)=\bot$, i.e.~there is no cube in
which~$x_i$ differs from~$y_i$, terminate.  Otherwise, set~$l'\leftarrow\bot$,
$r'\leftarrow\bot$.

\algstep D3. [Pick an inner pattern.] Pick one pattern~$c$ from~$F(\vec x,\vec
y)\land x_i\land\bar y_i$, i.e.~$c$ represents a mapping in which~$x_i=1$ is
mapped to~$y_i=0$.  Also, set the \emph{direction bit}~$d\gets 0$.

\algstep D4. [Update~$l'$ or~$r'$.] Whether~$l'$ or~$r'$ is updated depends
on the value of~$d$: \\
1)~If~$d=0$, let~$c'=\exists x_i\,\exists\vec y\,c$ and set~$l'\gets l'\lor c'$
and~$c\gets F(\vec x,\vec y)\land\bar x_i\land c'$. \\
2)~If~$d=1$,
let~$c'=\exists y_i\,\exists\vec x\,c$ and set~$r'\gets r'\lor c'$
and~$c\gets F(\vec x,\vec y)\land y_i\land c'$.

\algstep D5. [Change direction.] Set~$d\leftarrow 1-d$.

\algstep D6. [Repeat iteration?] If~$F(\vec x,\vec y)\land c\land(x_i\oplus
y_i)=\bot$, i.e.~the newly derived cube~$c$ represents a mapping in which~$x_i$
equals~$y_i$, return to step~4.  Otherwise, continue.

\algstep D7. [Update gate functions.]
  Update~$l$ and~$r$ using~$l'$ and~$r'$ and recompute~$f$.
  Return to step~2.\quad\slug

  \sn The algorithm creates the control functions~$l$ and~$r$ in an iterative
  manner.  Being initially assigned~$\bot$ in step~1, in each iteration of the
  outer loop (steps~2 to~7), the algorithm resolves one cycle as described in
  Fig.~\ref{fig:resolve} and stores the resulting functions in~$l'$ and~$r'$.
  To resolve a cycle first one input/output pattern~$c$ is picked in
  which~$x_i=1$ and~$y_i=0$~(step~3).  Based on this pattern successive patterns
  are implied as described in the previous section and~$l'$ and~$r'$ are updated
  accordingly~(steps~4 to~6).  This inner loop is repeated as long as the
  implied pattern is neither contained in~$F_{n'_i}$ nor in~$F_{p'_i}$,
  i.e.~\[F(\vec x,\vec y)\land c\land(x_i\oplus y_i)=\bot.\] Eventually, after a
  cycle has been resolved and~$l'$ and~$r'$ are determined, the control
  functions~$l$ and~$r$ are updated and~$f$ is recomputed.  For this purpose, we
  first set~\[r'\leftarrow \exists \vec y\,\left(r'\land
    \bigwedge_{j=1}^n(x_j\leftrightarrow y_j)\right).\] This changes~$r'$ such
  that all variables~$y_j$ are substituted by corresponding~$x_j$.  Afterwards,
  we set~$l\leftarrow l\oplus l'$,~$r\leftarrow r\oplus r'$, and
\begin{equation}
  \label{eq:update-f}
  f\leftarrow\T[l',i]\circ f\circ\T[r',i].
\end{equation}

Note that throughout the whole procedure of Algorithm~D the reversible
function will only be represented as its characteristic function using BDDs and
all operations will be carried out on this symbolic representation.  This
includes~\eqref{eq:update-f} for which also the single-target functions are
represented by its characteristic functions.  Function composition then
corresponds to Boolean matrix multiplication which can efficiently be
implemented using the smoothing operator~\citep{Knuth:2011}.

\begin{theorem}
  Algorithm~D is sound and complete.
\end{theorem}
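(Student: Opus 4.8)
The plan is to prove soundness and completeness separately, both resting on a combinatorial picture of~$F$ relative to the chosen variable~$x_i$. View~$F$ as its set of~$2^n$ satisfying cubes (``entries''), each encoding a mapping $\vec x\mapsto\vec y$ with $f(\vec x)=\vec y$, and attach to each entry its reduced input pattern $\vec x_{\hat\imath}$ and reduced output pattern $\vec y_{\hat\imath}$ (all coordinates but the $i$th). Since $f$ is bijective, each reduced input pattern and each reduced output pattern occurs in exactly two entries, so the bipartite multigraph whose vertices are reduced patterns and whose edges are the entries --- call it the \emph{cofactor graph} --- is $2$-regular, hence a disjoint union of cycles of even length. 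An entry is \emph{inner} if $x_i\neq y_i$ (it sits in $F_{p'_i}$ or $F_{n'_i}$) and \emph{outer} otherwise; the test in step~D2 says precisely that the current function has no inner entry, equivalently that it does not change in $x_i$. Two counting facts are used: $|\onset(F_{p'_i})|=|\onset(F_{n'_i})|$ by a bijection argument (so an inner entry exists iff one exists in the column $F\wedge x_i\wedge\bar y_i = F_{p'_i}$ that step~D3 samples), and --- the one real lemma --- \emph{every cycle of the cofactor graph contains an even number of inner entries}, because at a left vertex the two incident entries carry opposite $x_i$-bits and at a right vertex opposite $y_i$-bits, so $\bigoplus x_i$ and $\bigoplus y_i$ over a cycle of length $m$ both equal $m/2\bmod 2$ and cancel.

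\textbf{Soundness.} It suffices to keep the loop invariant $f=\T[l,i]\circ f_{\mathrm{cur}}\circ\T[r,i]$ with $f_{\mathrm{cur}}$ reversible. Reversibility is preserved since~\eqref{eq:update-f} composes $f_{\mathrm{cur}}$ with single-target gates (reversible), and by Lemmas~\ref{lem:char-perm-matrix} and~\ref{lem:multiply} that composition is exactly the symbolic Boolean matrix product on characteristic functions. The invariant holds after step~D1, where $l=r=\bot$ makes both single-target gates identities. It is preserved by one pass D2--D7 because single-target gates on line~$i$ compose by exclusive-or of their controls, $\T[a,i]\circ\T[b,i]=\T[a\oplus b,i]$: with $l\gets l\oplus l'$, $r\gets r\oplus r'$, and $f_{\mathrm{cur}}\gets\T[l',i]\circ f_{\mathrm{cur}}\circ\T[r',i]$ one gets $\T[l\oplus l',i]\circ\T[l',i]\circ f_{\mathrm{cur}}\circ\T[r',i]\circ\T[r\oplus r',i]=\T[l,i]\circ f_{\mathrm{cur}}\circ\T[r,i]$, unchanged. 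On termination $f_{\mathrm{cur}}$ does not change in $x_i$, and putting $f'=f_{\mathrm{cur}}$ and using self-inverseness of single-target gates yields $f'=\T[l,i]\circ f\circ\T[r,i]$, exactly the claimed output.

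\textbf{Completeness.} For the inner loop (D4--D6) I would first observe, via~\eqref{eq:smoothing-in-char-f} and reversibility, that $c'=\exists x_i\,\exists\vec y\,c$ is a single reduced input pattern and $F\wedge\bar x_i\wedge c'$ a single entry --- the partner of $c$ at its left vertex (and symmetrically at a right vertex when $d=1$). Hence the cubes $c$ produced in successive rounds trace a walk in the cofactor graph that starts at the inner entry picked from $F_{p'_i}$, alternately steps through a left then a right vertex in one fixed sense around the cycle, and by the test in step~D6 halts the first time the current entry is inner; by the lemma another inner entry exists in that cycle, and since the walk is monotone it reaches one without repeating an edge, so the inner loop terminates. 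For the outer loop (D2--D7) I would take as termination measure the number of inner entries, $|\onset(F\wedge(x_i\oplus y_i))|$ for the current $F$, and argue from the ``movings'' table (Fig.~\ref{fig:movings}) that the update~\eqref{eq:update-f} built from that walk sends the two inner entries at the ends of the resolved arc to outer columns while every other entry that is touched merely moves between the two outer columns; so the measure drops by exactly $2$ each iteration, and being a non-negative integer it reaches $0$ after finitely many iterations.

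\textbf{Main obstacle.} The crux is the last claim --- showing that the control functions $l',r'$ assembled along the arc, once installed by~\eqref{eq:update-f} (including the realignment $r'\gets\exists\vec y\,(r'\wedge\bigwedge_j(x_j\leftrightarrow y_j))$), move exactly the two inner endpoints of the arc out and change nothing else about the inner/outer partition, so that no new inner entry is created in any other cycle. This is the case analysis compressed into Fig.~\ref{fig:movings}: for each entry one must check which of $l'(\vec x)$, $r'(\vec y)$ holds on it and read off that an inner arc-entry lands in an outer column, an outer arc-entry stays in an outer column, and an entry off the arc is fixed. Once this bookkeeping (and the straightforward verification that each BDD operation in D2--D7 computes what the description claims) is in place, the rest of the argument is routine.
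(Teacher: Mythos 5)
Your proof is correct and follows the same outline as the paper's (inner-loop termination, outer-loop termination via a decreasing count of inner patterns, soundness because $f$ is updated with the very gates that are returned), but it supplies substantive content that the paper's proof only gestures at. In particular, the paper justifies inner-loop termination solely by pointing to the illustration in Fig.~\ref{fig:resolve}; your cofactor-graph formulation --- the bipartite multigraph on reduced input/output patterns is $2$-regular because $f$ is bijective, hence a union of cycles, and each cycle carries an even number of inner entries by the parity argument at left and right vertices --- is the actual reason the walk must reach a second inner entry before closing up. You also correctly sharpen the paper's termination measure: the paper argues only that each resolved cycle moves \emph{at least one} pattern from the inner to the outer co-factors, which by itself does not rule out new inner patterns being created elsewhere; your observation that every vertex on the resolved arc has both of its incident entries on the arc, so that off-arc entries satisfy neither $l'$ nor $r'$ and interior arc entries satisfy both, is exactly what is needed to conclude that the count $|\onset(F\land(x_i\oplus y_i))|$ drops by exactly two and never increases. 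The explicit loop invariant $f=\T[l,i]\circ f_{\mathrm{cur}}\circ\T[r,i]$ with the identity $\T[a,i]\circ\T[b,i]=\T[a\oplus b,i]$ likewise makes precise the paper's one-sentence soundness claim. In short: same route, but yours closes the gaps the paper leaves open.
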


\begin{proof}
  The inner loop will always terminate as illustrated in Fig.~\ref{fig:resolve}.
  Also, by resolving a cycle Algorithm~D will always move at least one pattern
  from the \emph{inner} co-factors~$F_{p'_i}$ and~$F_{n'_i}$ to the \emph{outer}
  co-factors~$F_{n_i}$ and~$F_{p_i}$.  Hence, the termination condition for the
  outer loop in step~2 will eventually hold.  The algorithm is sound since the
  input function is updated using the same extracted gates that are returned.
\end{proof}

\section{Increasing Efficiency}
\label{sec:incr-effic}
This section describes two strategies to increase the efficiency of Algorithm~D
by considering certain special cases explicitly.
\subsection{Only one single-target gate required}
It turns out that in many non-random functions only one single-target gate is
sufficient in order to equalize a variable, i.e~either~$l=\bot$ or~$r=\bot$.
Also, this case can easily be checked and if it applies, the control functions
can efficiently be computed.

We consider the case in which~$r$ may be~$\bot$ and hence only one single-target
gate needs to be added for the control function~$l$.  The check first removes
the variable~$x_i$ from the characteristic function and then reassigns~$x_i$
such that it equals~$y_i$.  The result is assigned to~$F'$:
\begin{equation}
  F'=\exists x_i\,F\land(x_i\leftrightarrow y_i)
\end{equation}
If~$F'$ is reversible, only one single-target gate is required.  Reversibility
of~$F'$ can be checked by the condition
\begin{equation}
  \label{eq:reversibility-check}
  \exists\vec x\, F'=\top,
\end{equation}
i.e.~$F'$ still consists of all possible input patterns.  The operation
in~\eqref{eq:reversibility-check} can efficiently be performed using BDDs.

\begin{figure}
  \def\tabcolsep{1.15pt}
  \def\arraystretch{1}
  \subfloat[Negative example]{%
    \begin{tabular}{|CCC|CCC|} \hline
      x_1&x_2&x_3&y_1&y_2&y_3 \\ \hline
      0 & 0 & 0 & 0 & 0 & 0 \\
      0 & 0 & 1 & 0 & 0 & 1 \\
      0 & 1 & 0 & 0 & 1 & 0 \\
      0 & 1 & 1 & 1 & 0 & 0 \\
      1 & 0 & 0 & 1 & 0 & 1 \\
      1 & 0 & 1 & 0 & 1 & 1 \\
      1 & 1 & 0 & 1 & 1 & 1 \\
      1 & 1 & 1 & 1 & 1 & 0 \\ \hline
    \end{tabular}
    \begin{tabular}{|>{\bf}CCC|CCC|} \hline
      \text{$x_1$}&x_2&x_3&y_1&y_2&y_3 \\ \hline
      0 & 0 & 0 & 0 & 0 & 0 \\
      0 & 0 & 1 & 0 & 0 & 1 \\
      0 & 1 & 0 & 0 & 1 & 0 \\
      1 & 1 & 1 & 1 & 0 & 0 \\
      1 & 0 & 0 & 1 & 0 & 1 \\
      0 & 0 & 1 & 0 & 1 & 1 \\
      1 & 1 & 0 & 1 & 1 & 1 \\
      1 & 1 & 1 & 1 & 1 & 0 \\ \hline
    \end{tabular}
  }\hfill
  \subfloat[Positive example]{%
    \begin{tabular}{|CCC|CCC|} \hline
      x_1&x_2&x_3&y_1&y_2&y_3 \\ \hline
      0 & 0 & 0 & 0 & 0 & 1 \\
      0 & 0 & 1 & 0 & 0 & 0 \\
      0 & 1 & 0 & 0 & 1 & 1 \\
      0 & 1 & 1 & 1 & 0 & 0 \\
      1 & 0 & 0 & 1 & 1 & 1 \\
      1 & 0 & 1 & 1 & 1 & 0 \\
      1 & 1 & 0 & 1 & 0 & 1 \\
      1 & 1 & 1 & 0 & 1 & 0 \\ \hline
    \end{tabular}
    \begin{tabular}{|>{\bf}CCC|CCC|} \hline
      \text{$x_1$}&x_2&x_3&y_1&y_2&y_3 \\ \hline
      0 & 0 & 0 & 0 & 0 & 1 \\
      0 & 0 & 1 & 0 & 0 & 0 \\
      0 & 1 & 0 & 0 & 1 & 1 \\
      1 & 1 & 1 & 1 & 0 & 0 \\
      1 & 0 & 0 & 1 & 1 & 1 \\
      1 & 0 & 1 & 1 & 1 & 0 \\
      1 & 1 & 0 & 1 & 0 & 1 \\
      0 & 1 & 1 & 0 & 1 & 0 \\ \hline
    \end{tabular}
  }
  \caption{Negative and positive example for checking whether one single-target
    gate is sufficient.  The original truth table is always on the left-hand
    side and an updated version in which~$x_1$ is set equal to~$y_1$ on the
    right-hand side.}
  \label{fig:only-one-gate}
\end{figure}

\begin{example}
  A negative and positive example of this optimization step is illustrated in
  Fig.~\ref{fig:only-one-gate}.  In both cases the variable~$x_1$ is updated to
  equal the value of~$y_1$.  Figure~\ref{fig:only-one-gate}(a) shows the truth
  table from Fig.~\ref{fig:devos-example}(a) and it can be seen that the
  resulting truth table represents an irreversible function since the input
  patterns~$001$ and~$111$ occur twice.  For the function in
  Fig.~\ref{fig:devos-example}(b) one single-target gate is sufficient to
  equalize~$x_1$ and~$y_1$ since the resulting truth table represents a
  reversible function.
\end{example}

\sn Once $F'$ has been determined one can determine the control function~$l$ by
inspecting the input patterns which differ from the original function~$F$.  For
this purpose one needs to remove all output variables and~$x_i$ from their
intersection, i.e.
\begin{equation}
  \label{eq:new-control-function-special-case}
  \exists x_i\,\exists\vec y\,(F\land F').
\end{equation}

\begin{example}
  For the function in Fig.~\ref{fig:only-one-gate} the application
  of~\eqref{eq:new-control-function-special-case} yields
  \[ \exists x_1\,\exists\vec y\,(F\land F') = x_2x_3. \]
\end{example}

\sn
In order to consider this special case, we added a preprocessing step to
Algorithm~D before step~2:

\algindentset{D1a}
\algstep D1a. [Is $r=\bot$?] Let~$F'=\exists x_i\, F
\land (x_i\leftrightarrow y_i)$.
If~$\exists\vec x\, F'=\top$,
set~$l\leftarrow\exists x_i\,\exists\vec y\,(F\land F')$ and terminate.

\sn
The same can be done for the case that~$l=\bot$ analogously:

\algstep D1b. [Is $l=\bot$?] Let~$F'=\exists y_i\, F
\land (x_i\leftrightarrow y_i)$.
If~$\exists\vec y\, F'=\top$,
set~$r\leftarrow\exists y_i\,\exists\vec x\,(F\land F')$ and terminate.

\subsection{Cycles of length~$1$ and~$2$}
To resolve a cycle in Algorithm~D one computes implied patterns iteratively in
each step.  As a result, cycles need to be resolved one after the other, which
becomes time consuming if many cycles need to be resolved. Alternatively, one
can also express cycles of a fixed length in one Boolean formula and resolve
them all at once.  However, the size of the formula grows with respect to the
length of the cycles and is therefore only applicable to small lengths.  In our
implementation we considered cycles of length~$1$ and~$2$.

\begin{example}
  The columns~$F_{p'_1}$ and~$F_{n'_1}$ of the co-factor table in
  Fig.~\ref{fig:length-optimization} both contain the entries~$01$ and~$10$.  A
  single-target gate controlled by~$l'=x_2\oplus x_3$ can resolve both of them
  at once without the computationally more expensive iteration steps in
  Algorithm~D.
\end{example}

\begin{figure}[t]
  \def\tabcolsep{1.15pt}
  \def\arraystretch{1}
\centering
\subfloat[Length~$1$]{%
\begin{tabular}{ccccc}
\hline
\multicolumn{1}{c}{} & $F_{n_1}$ & $F_{p'_1}$ & $F_{n'_1}$ & $F_{p_1}$ \\
\multicolumn{1}{c}{$x_1 \mapsto y_1$ } & $0 \mapsto 0$ & $1 \mapsto 0$  & $0 \mapsto 1$ & $1 \mapsto 1$  \\ \hline
\parbox[t]{2mm}{\multirow{4}{*}{\rotatebox[origin=c]{90}{\footnotesize Inputs}}} & $x_2x_3$  & $x_2 x_3$  & $x_2 x_3$ & $x_2 x_3$ \\ \cline{ 2- 5}\noalign{\vspace{2pt}}
\multicolumn{ 1}{c}{} & 11 & 01 & 01 & 11 \\
\multicolumn{ 1}{c}{} &    & 10 & 10 &    \\
\multicolumn{ 1}{c}{} &    & 11 & 00 &    \\  \hline
\parbox[t]{2mm}{\multirow{4}{*}{\rotatebox[origin=c]{90}{\footnotesize Outputs}}} & $y_2 y_3$  & $y_2 y_3$  & $y_2 y_3$  & $y_2 y_3$  \\ \cline{ 2- 5}\noalign{\vspace{2pt}}
\multicolumn{ 1}{c}{} & 00 & 11 & 00 & 01 \\
\multicolumn{ 1}{c}{} &    & 10 & 11 &    \\
\multicolumn{ 1}{c}{} &    & 01 & 10 &    \\ \hline
\end{tabular}}\hfill
\subfloat[Length~$2$]{%
\begin{tabular}{ccccc}
\hline
\multicolumn{1}{c}{} & $F_{n_1}$ & $F_{p'_1}$ & $F_{n'_1}$ & $F_{p_1}$ \\
\multicolumn{1}{c}{$x_1 \mapsto y_1$ } & $0 \mapsto 0$ & $1 \mapsto 0$  & $0 \mapsto 1$ & $1 \mapsto 1$  \\ \hline
\parbox[t]{2mm}{\multirow{4}{*}{\rotatebox[origin=c]{90}{Inputs}}} & $x_2x_3$  & $x_2 x_3$  & $x_2 x_3$ & $x_2 x_3$ \\ \cline{ 2- 5}\noalign{\vspace{2pt}}
\multicolumn{ 1}{c}{} & 00 & 01 & 11 & 00 \\
\multicolumn{ 1}{c}{} & 01 &    &    & 10 \\
\multicolumn{ 1}{c}{} & 10 &    &    & 11 \\ \hline
\parbox[t]{2mm}{\multirow{4}{*}{\rotatebox[origin=c]{90}{Outputs}}} & $y_2 y_3$  & $y_2 y_3$  & $y_2 y_3$  & $y_2 y_3$  \\ \cline{ 2- 5}\noalign{\vspace{2pt}}
\multicolumn{ 1}{c}{} & 00 & 11 & 10 & 00 \\
\multicolumn{ 1}{c}{} & 10 &    &    & 11 \\
\multicolumn{ 1}{c}{} & 01 &    &    & 01 \\ \hline
\end{tabular}}
\caption{Cycles of length~$1$ and~$2$}
\label{fig:length-optimization}
\end{figure}

\sn In general such control functions for cycles of length~$1$ are expressed by
\begin{equation}
  \label{eq:cycles-of-length-1}
  l'\gets\exists\vec y\,F_{p'_i}\land\exists\vec y\,F_{n'_i}
  \quad\text{and}\quad
  r'\gets\exists\vec x\,F_{p'_i}\land\exists\vec x\,F_{n'_i},
\end{equation}
i.e.~the intersection of the co-factors is computed after dropping the outputs
or inputs, respectively.  It is important not to compute these two control
functions at the same time as they may depend on each other.  Furthermore, in
our experimental evaluation we discovered that the best results are obtained
when~$f$ gets recomputed after each control function has been determined.
Cycles of length~$1$ are therefore handled using the following additional step:

\algstep D1c. [$1$-cycles.] Set~$l'\leftarrow\exists\vec y\,F_{p'_i}\land\exists
\vec y\,F_{n'_i}$.
Update~$l$ using~$l'$ and recompute~$f$.
Set~$l'\leftarrow\bot$ and set~$r'\leftarrow\exists\vec x\,F_{p'_i}\land\exists
\vec x\,F_{n'_i}$.
Update~$r$ using~$r'$ and recompute~$f$.

\sn The `updates' correspond to the same procedure as referred to in step~7 in
Algorithm~D.  This idea can be extended to cycles of length~$2$, as illustrated
by means of the following example.

\begin{example}
  Consider the co-factor table in Fig.~\ref{fig:length-optimization}(b).
  Starting with the pattern~$101\mapsto 011$ (column~$F_{p'_1}$), Algorithm~D
  will first imply the pattern~$001\mapsto 010$ (column~$F_{n_1}$) and
  afterwards pattern~$011\mapsto 110$ (column~$F_{n'_1}$).  Then, the cycle is
  resolved.  In these three patterns the input part~$01$ of the middle
  pattern~$001\mapsto 010$ matches the input part of the first pattern that is
  contained in column~$F_{p'_1}$.  At the same time, the output part~$10$
  matches the output part of the last pattern that is contained in
  column~$F_{n'_1}$ and therefore $l'=\bar x_2x_3$ and~$r'=y_2\bar y_3$.
\end{example}

\sn In general, cycles of length~$2$ are obtained by
\begin{equation}
  \label{eq:cycles-of-length-2}
  l'\leftarrow\exists\vec y\,g
  \quad\text{and}\quad
  r'\leftarrow\exists\vec x\,g,
  \quad\text{where $g=F_{n_i}\land\exists\vec y\,F_{p'_i}
\land\exists\,\vec x F_{n'_i}$}.
\end{equation}

\sn Cycles of length~$2$ are therefore handled using the following additional
step:

\algstep D1d. [$2$-cycles.]  Let~$g=F_{n_i}\land\exists\vec y\,F_{p'_i}
\land\exists\,\vec x F_{n'_i}$. Set~$l'\leftarrow\exists\vec y\,g$ and $r'
\leftarrow\exists\vec x\,g$.  Update~$l$ and~$r$ using~$l'$ and~$r'$ and
recompute~$f$.  Let~$g=F_{p_i}\land\exists\vec y\,F_{n'_i} \land\exists\,\vec x
F_{p'_i}$. Set~$l'\leftarrow\exists\vec y\,g$ and $r' \leftarrow\exists\vec
x\,g$.  Update~$l$ and~$r$ using~$l'$ and~$r'$ and recompute~$f$.

\section{Variable Ordering}
\label{sec:ordering}
In order to obtain a circuit from a given reversible
function~$f:\bbbb^n\to\bbbb^n$, one needs to apply Algorithm~D for each of the
$n$ variables.  As a result, Algorithm~D is called~$n$ times.  However, the
order in which variables are equalized by Algorithm~D does not necessarily need
to be the natural variable ordering~$x_1,x_2,\dots,x_n$, which is e.g.~being used
in~\citep{VR:2008}.  Instead, any of the possible~$n!$ different variable
orderings can be used and each ordering may lead to a different circuit
realization with respect to the number of gates.  Since the number of different
variable orderings is large, finding the smallest circuit cannot be done
efficiently.  In this section we are motivating two heuristics that may lead to
cheaper circuits.

\subsection{Greedy Heuristic}
\label{sec:greedy}
The first heuristic follows a greedy approach.  In each step the two
single-target gates for all variables that have not yet been processed are
computed.  The single-target gate-pair is chosen that leads to the
cheapest realization after expansion to Toffoli gates.  Using this heuristic the
algorithm is executed~$n(n+1)/2$ times for a function with~$n$ variables.
Consequently, using this heuristic will probably increase the overall run-time.

\subsection{Hamming Distance Heuristic}
\label{sec:hamming}
Algorithm~D computes control functions~$l$ and~$r$ by equalizing the values for
the variables~$x_i$ and~$y_i$.  Consequently, it seems plausible to use
variables~$x_i$ and~$y_i$ that already agree on many of the input/output
mappings.  This heuristic first counts this number for each variable and then
chooses the one with the maximum number of such patterns.

Note that the patterns do not have to be counted explicitly.  The number of such
patterns is equal to
\begin{equation}
  \label{eq:hamming-distance-heuiristic}
  |\onset(F(\vec x,\vec y)\land(x_i\oplus y_i))|.
\end{equation}
Obviously the number of paths in the co-factors~$F_{p'_i}$ and~$F_{n'_i}$ are
small when the result of~\eqref{eq:hamming-distance-heuiristic} is large.  The
assumption is that in this case less cycles need to be resolved.  However, the
experimental results will not confirm this assumption.

\section{Partial Functions}
\label{sec:partial-functions}
\begin{figure}
  \centering
  \def\tabcolsep{1.15pt}
  \def\arraystretch{1}
  \begin{tabular}{|CCC|CCC|} \hline
    x_1&x_2&x_3&y_1&y_2&y_3 \\ \hline
    0 & 0 & 0 & 0 & 1 & 0 \\
    0 & 1 & 0 & 1 & 0 & 0 \\
    0 & 1 & 1 & 0 & 1 & 1 \\
    1 & 0 & 0 & 0 & 0 & 0 \\
    1 & 0 & 1 & 1 & 1 & 0 \\ \hline
  \end{tabular}
  \caption{Partial function represented as truth table}
  \label{fig:partial-function}
\end{figure}

A partial reversible function on~$n$ variables is a function that does not
contain all input/output mappings.  The original algorithm presented
in~\citep{VR:2008} cannot be applied to such functions in its original form
since then cycles may not be complete.  Algorithm~D can however be extended in
order to support partial functions by extending the function on demand.

\begin{example}
  A partial reversible function is given in terms of its truth table
  representation in Fig.~\ref{fig:partial-function}.  Its representation as
  characteristic function is
  \[
  \begin{aligned}
    F(x_1,x_2,x_3,y_1,y_2,y_3) &
    = \bar x_1\bar x_2\bar x_3\bar y_1y_2\bar y_3
    \lor\bar x_1x_2\bar x_3y_1\bar y_2\bar y_3
    \lor\bar x_1x_2x_3\bar y_1y_2y_3 \\
    & \,\lor x_1\bar x_2\bar x_3\bar y_1\bar y_2\bar y_3
    \lor x_1\bar x_2x_3y_1y_2\bar y_3.
  \end{aligned}
  \]
\end{example}

\sn
When dealing with partial functions, Algorithm~D may run into the problem that
the implied pattern~$c$ may be assigned~$\bot$ after step~4 when a corresponding
pattern is not defined by the original function.

\begin{figure}[t]
  \def\tabcolsep{1.15pt}
  \def\arraystretch{1}
\centering
\subfloat[Co-factor table]{%
\begin{tabular}{ccccc}
\hline
\multicolumn{1}{c}{} & $F_{n_1}$ & $F_{p'_1}$ & $F_{n'_1}$ & $F_{p_1}$ \\
\multicolumn{1}{c}{$x_1 \mapsto y_1$ } & $0 \mapsto 0$ & $1 \mapsto 0$  & $0 \mapsto 1$ & $1 \mapsto 1$  \\ \hline
\parbox[t]{2mm}{\multirow{4}{*}{\rotatebox[origin=c]{90}{\footnotesize\enspace\enspace Inputs}}} & $x_2x_3$  & $x_2 x_3$  & $x_2 x_3$ & $x_2 x_3$ \\ \cline{ 2- 5}\noalign{\vspace{2pt}}
\multicolumn{ 1}{c}{} & 00 & 00 & 10 & 01 \\
\multicolumn{ 1}{c}{} & 11 &    &    &    \\ \hline
\parbox[t]{2mm}{\multirow{4}{*}{\rotatebox[origin=c]{90}{\footnotesize\enspace\enspace Outputs}}} & $y_2 y_3$  & $y_2 y_3$  & $y_2 y_3$  & $y_2 y_3$  \\ \cline{ 2- 5}\noalign{\vspace{2pt}}
\multicolumn{ 1}{c}{} & 10 & 00 & 00 & 10 \\
\multicolumn{ 1}{c}{} & 11 &    &    &    \\ \hline
\end{tabular}}\hfill\
\subfloat[Pattern implication]{%
  \let\da=\Downarrow
  \begin{tabular}{|>{\bf}c|CCC|>{\quad}L|}
\hline
(1) & 100 & \mapsto & 000 & l'\gets\bar x_2\bar x_3            \\
    & \da &         &     &                                    \\
(2) & 000 & \mapsto & 010 & r'\gets y_2\bar y_3                \\
    &     &         & \da &                                    \\
(3) & 101 & \mapsto & 110 & l'\gets l'\lor\bar x_2x_3=\bar x_2 \\
    & \da &         &     &                                    \\
(4) & ?   &         &     &                                    \\
\hline
  \end{tabular}
}
\caption{Resolving a cube in a partial function}
\label{fig:partial-functions-algorithm-d}
\end{figure}

\begin{example}
  \label{ex:resolving-in-partial}
  Figure~\ref{fig:partial-functions-algorithm-d} demonstrates this problem.  In
  Fig.~\ref{fig:partial-functions-algorithm-d}(a) the co-factor table of the
  function in Fig.~\ref{fig:partial-function} is shown to aid the comprehension
  of the resolving steps in Fig.~\ref{fig:partial-functions-algorithm-d}(b).  As
  can be seen, two further patterns are applied from the initial one that have
  been picked from the co-factor~$F_{p'_1}$.  The iteration step is not
  completed after the third pattern, however, there is no pattern available to
  continue that has the input~$001$.
\end{example}

\sn The idea is to extend the function in such situations.  All not specified
patterns can be chosen arbitrarily as long as reversibility of~$f$ is ensured.
For illustration purposes, let us consider the case in which after step~4 we
have $c=\bot$ and~$d=0$.  That is, there is no pattern specified in~$f$ for the
input pattern~$\bar x_ic'$.  It is sufficient to set~$c\leftarrow\bar x_ic'c''$
where~$c''$ is a cube randomly picked from~$\exists\vec x\,\bar F$, i.e.~all output
patterns that are not specified in~$f$.  Afterwards, the new pattern is added
to~$f$ by setting~$f\leftarrow f\lor c$.  Consequently, it is not contained
in~$\exists\vec x\,\bar F$ any longer.

\begin{example}
  For Example~\ref{ex:resolving-in-partial} we have~$x_i=x_1$, $c'=\bar
  x_2x_3$, and
  \[
  \exists\vec x\,\bar F=\bar y_1\bar y_2y_3\lor y_1\bar y_2y_3\lor y_1y_2y_3.
  \]
\end{example}

\sn In order to further increase the efficiency, our implementation first tries
to find an output pattern~$c''$ that is in~$y_i\land \exists\vec x\,\bar F$,
i.e.~patterns in which~$y_i$ is assigned~$1$.  In this case, one can ensure that
the condition in step~6 fails and the cycle has been resolved immediately.

\begin{remark}
  Note that of course both algorithms can deal with partial functions if they
  are made total prior to synthesis by randomly filling all non-specified
  mappings.  However, that technique may not always be efficient.  In contrast,
  our proposed technique does not necessarily add all non-specified mappings.
\end{remark}

\newcolumntype{R}[1]{>{\raggedright\arraybackslash }b{#1}}

\section{Experimental Evaluations}
\label{sec:exper-eval}
The synthesis approach we propose has been implemented in C++ using BDDs on top
of RevKit~\citep{SFWD:12} and CUDD~\citep{Somenzi:2001}.\footnote{The source
  code that has been used to perform this evaluation is available at
  www.revkit.org (version 2.0).}  The experimental evaluation has been carried out on a 3.4~GHz
Quad-Core Intel Xeon Processor with 32~GB of main memory running Linux~3.14.
The experimental results in Sects.~\ref{sec:comp-relat-algor}
and~\ref{sec:eval-vari-order} have been generated with the RevKit program
`\emph{rcbdd\_synthesis}'.
For all experiments, we have set the time-out to 10000~seconds.  We have
verified all our results using the tool
\texttt{abc}~\citep{BM:10}.\footnote{The tool \texttt{abc} can be downloaded from
  bitbucket.org/alanmi/abc. The read routine for reversible circuit files is
  available at bitbucket.org/msoeken/abc} The following sections discuss the
experiments.

\begin{table}[t]%
\fontsize{7pt}{8pt}\selectfont
\caption{Comparison to related algorithms}
\label{tbl:other-approaches}
\begin{center}
\renewcommand{\tabcolsep}{2pt}
\begin{tabularx}{\linewidth}{Xr|rrr|rrr||rrrrr}
\hline 

 & \multicolumn{1}{r|}{} & \multicolumn{3}{c|}{Truth table based } &\multicolumn{3}{c||}{QMDD based} &  \multicolumn{5}{c}{BDD based}  \\
 & \multicolumn{1}{r|}{} & \multicolumn{3}{c|}{(De Vos et al., 2008)}
 &\multicolumn{3}{c||}{\citep{SWH+:2012}}
 &  \multicolumn{5}{c}{Proposed approach~(Sect.~\ref{sec:algorithm})}  \\ 

Name & $n$ & $d_{\rm T}$ & $q_{\rm T}$ & $t_{\rm T}$ & $d_{\rm Q}$ & $q_{\rm Q}$
& $t_{\rm Q}$ & $d$ & $q$ & $t$ & Imp.~T & Imp.~Q \\ \hline
sym6       & 7  & 172   & 12852    & 0.48    & 262   & 41552    & \bf0.02    & 124   & 8911     & 0.11       & 30.66    & 78.55    \\
urf2       & 8  & 279   & 26257    & 0.15    & 763   & 165697   & \bf0.14    & 268   & 24066    & 0.19       & 8.34     & 85.48    \\
con1       & 8  & 327   & 32102    & 2.06    & 659   & 139118   & \bf0.09    & 233   & 22988    & 0.23       & 28.39    & 83.48    \\
hwb9       & 9  & 797   & 107219   & 9.26    & 2294  & 629433   & 0.91       & 584   & 73465    & \bf0.79    & 31.48    & 88.33    \\
urf1       & 9  & 812   & 106176   & 6.74    & 1957  & 533680   & 0.71       & 563   & 74858    & \bf0.66    & 29.50    & 85.97    \\
urf5       & 9  & 388   & 61649    & 2.78    & 693   & 185752   & \bf0.13    & 213   & 32676    & 0.16       & 47.00    & 82.41    \\
adr4       & 9  & 681   & 101577   & 7.16    & 1130  & 290997   & \bf0.26    & 459   & 64309    & 0.78       & 36.69    & 77.90    \\
sym9       & 10 & 1653  & 255990   & 15.34   & 3895  & 1276583  & \bf2.94    & 1175  & 174678   & 3.19       & 31.76    & 86.32    \\
urf3       & 10 & 1690  & 268828   & 16.86   & 4071  & 1347318  & 3.76       & 1081  & 162225   & \bf2.15    & 39.65    & 87.96    \\
5xp1       & 10 & 1507  & 232995   & 11.19   & 2231  & 724052   & \bf1.07    & 837   & 134267   & 3.99       & 42.37    & 81.46    \\
rd84       & 11 & 3641  & 689283   & 31.53   & 6389  & 2445443  & \bf10.95   & 2063  & 401660   & 20.60      & 41.73    & 83.58    \\
sym10      & 11 & 3657  & 664202   & 33.31   & 9812  & 3780469  & 23.05      & 2467  & 461538   & \bf12.01   & 30.51    & 87.79    \\
urf4       & 11 & 3911  & 713328   & 37.11   & 11684 & 4508910  & 40.74      & 2641  & 491645   & \bf12.10   & 31.08    & 89.10    \\
clip       & 11 & 3542  & 685118   & 35.55   & 7913  & 3036313  & \bf16.88   & 2271  & 434952   & 18.06      & 36.51    & 85.67    \\
cycle10\_2 & 12 & 27    & 4200     & 8.41    & 36    & 6286     & 0.07       & 27    & 4200     & \bf0.05    & 0.00     & 33.18    \\
dc2        & 13 & 7999  &  2706179 & 119.73  & 11346 &  5612922 & \bf74.78   & 4102  &  1395422 & 224.06     & 48.44    & 75.14    \\
misex1     & 14 & 34671 & 12014870 & 539.30  & 18412 & 10115630 & \bf274.82  & 6867  & 2733073  & 1046.12    & 77.25    & 72.98    \\
co14       & 15 & 83652 & 30502311 & 2032.06 & 63640 & 38678808 & 2677.82    & 25065 & 10028634 & \bf730.96  & 67.12    & 74.07    \\
urf6       & 15 & 15197 & 7275366  & 2855.35 & 23497 & 14432936 & 336.52     & 2164  & 1215312  & \bf3.96    & 83.30    & 91.58    \\
dk27       & 15 & 50807 & 19144930 & 2074.22 & 57619 & 35018963 & \bf3773.21 & 23882 & 11254565 & 8598.83    & 41.21    & 67.86    \\
C7552      & 20 & ---   & ---      & TO      & 356   & 309008   & 133.89     & 257   & 180894   & \bf8.53    & ---      & 41.46    \\
bw         & 32 & ---   & ---      & TO      & ---   & ---      & MO         & 2585  & 3766784  & \bf2076.51 & ---      & ---      \\
\hline \multicolumn{13}{p{.99\linewidth}}{%
  \vspace{0pt} The percentage improvements are with respect to the values for
  quantum costs in the blocks \emph{Truth table based} and \emph{QMDD based}.  }
\end{tabularx}
\end{center}
\end{table}

\begin{table}[t]
\fontsize{7pt}{8pt}\selectfont
\caption{Evaluation of variable heuristics}
\label{tbl:orderings}
\begin{center}
\renewcommand{\tabcolsep}{2.5pt}
\begin{tabularx}{\linewidth}{Xr|rrr||rrrr|rrrr}
\hline 
   & \multicolumn{1}{r|}{} & \multicolumn{3}{c||}{BDD based~(Sect.~\ref{sec:algorithm})} &  \multicolumn{4}{c|}{Greedy~(Sect.~\ref{sec:greedy})} & \multicolumn{4}{c}{Hamming~(Sect.~\ref{sec:hamming})}   \\

 Name &  $n$ &  $d$ & $q$ & $t$ & $d$ & $q$ & $t$ & Imp. & $d$ & $q$ & $t$ & Imp. \\  \hline 
sym6       & 7  & 124   & 8911      & 0.11    & 113  & 8540       & 0.56     & 4.16     & 107  & \bf7973    & 0.11   & 10.53    \\
urf2       & 8  & 268   & \bf24066  & 0.19    & 259  & 24528      & 1.59     & -1.92    & 263  & 24535      & 0.27   & -1.95    \\
con1       & 8  & 233   & 22988     & 0.23    & 201  & \bf20104   & 1.58     & 12.55    & 234  & 20447      & 0.34   & 11.05    \\
hwb9       & 9  & 584   & \bf73465  & 0.79    & 572  & 74690      & 8.29     & -1.67    & 584  & \bf73465   & 0.96   & 0.00     \\
urf1       & 9  & 563   & 74858     & 0.66    & 556  & 74116      & 6.79     & 0.99     & 573  & \bf73094   & 0.92   & 2.36     \\
urf5       & 9  & 213   & \bf32676  & 0.16    & 188  & 33243      & 2.11     & -1.74    & 307  & 46053      & 0.34   & -40.94   \\
adr4       & 9  & 459   & 64309     & 0.78    & 171  & \bf23401   & 2.93     & 63.61    & 195  & 26537      & 0.42   & 58.74    \\
sym9       & 10 & 1175  & 174678    & 3.19    & 814  & \bf148295  & 27.45    & 15.10    & 1014 & 164388     & 3.95   & 5.89     \\
urf3       & 10 & 1081  & \bf162225 & 2.15    & 1051 & 162267     & 29.27    & -0.03    & 1058 & 165242     & 2.92   & -1.86    \\
5xp1       & 10 & 837   & \bf134267 & 3.99    & 838  & 138159     & 26.73    & -2.90    & 847  & 147315     & 5.38   & -9.72    \\
rd84       & 11 & 2063  & 401660    & 20.60   & 1996 & \bf392350  & 178.86   & 2.32     & 1868 & 401940     & 26.01  & -0.07    \\
sym10      & 11 & 2467  & 461538    & 12.01   & 1405 & \bf334313  & 123.13   & 27.57    & 2068 & 437864     & 17.63  & 5.13     \\
urf4       & 11 & 2641  & 491645    & 12.10   & 2629 & 496356     & 187.30   & -0.96    & 2629 & \bf490826  & 22.21  & 0.17     \\
clip       & 11 & 2271  & 434952    & 18.06   & 2214 & \bf423724  & 193.36   & 2.58     & 2344 & 455392     & 32.86  & -4.70    \\
cycle10\_2 & 12 & 27    & \bf4200   & 0.05    & 27   & \bf4200    & 0.39     & 0.00     & 27   & \bf4200    & 0.05   & 0.00     \\
dc2        & 13 & 4102  & 1395422   & 224.06  & 3645 & 1274742    & 1131.20  & 8.65     & 3230 & \bf1190154 & 185.02 & 14.71    \\
misex1     & 14 & 6867  & 2733073   & 1046.12 & 7937 & 3111290    & 8044.99  & -13.84   & 3949 & \bf1714174 & 480.96 & 37.28    \\
co14       & 15 & 25065 & 10028634  & 730.96  & 1603 & \bf714287  & 671.89   & 92.88    & ---  & ---        & TO     & ---      \\
urf6       & 15 & 2164  & 1215312   & 3.96    & 2215 & \bf1253000 & 69.89    & -3.10    & 2215 & 1244264    & 7.66   & -2.38    \\
dk27       & 15 & 23882 & 11254565  & 8598.83 & 8549 & 3998148    & 15257.82 & 64.48    & 4338 & \bf2141622 & 795.90 & 80.97    \\
C7552      & 20 & 257   & 180894    & 8.53    & 216  & \bf154448  & 27.52    & 14.62    & 367  & 274456     & 12.49  & -51.72   \\
bw         & 32 & 2585  & 3766784   & 2076.51 & ---  & ---        & TO       & ---      & ---  & ---        & TO     & ---      \\
 \hline \multicolumn{13}{p{.99\linewidth}}{%
   \vspace{0pt} The percentage improvements in the blocks \emph{Greedy} and
   \emph{Hamming} are with respect to the values for quantum costs ($q$) in the
   block \emph{BDD based}.  }
\end{tabularx}
\end{center}
\end{table}

\subsection{Comparison to Related Algorithms}
\label{sec:comp-relat-algor}
In this section, we compare our proposed algorithm to the original truth table
based variant from~\citep{VR:2008} and the QMDD based synthesis method
from~\citep{SWH+:2012}.  The truth table based approach has been re-implemented
in the RevKit program `\emph{young\_subgroup\_synthesis}'.  The considered
functions are taken from the LGSynth'93 benchmarks
(\url{www.cbl.ncsu.edu:16080/benchmarks/lgsynth93/}) and
from~\url{www.cs.uvic.ca/~dmaslov/}.  These functions are mainly irreversible
and provided in terms of their \emph{sum-of-product} representation saved as PLA
files.  We have used the embedding algorithm proposed in~\citep{SWG+:14} to
embed them as a reversible function that is represented by means of the binary
decision diagram of the characteristic function.  The time required for the
embedding is not accounted for in the reported run-times but can be obtained
from~\citep{SWG+:14}.
Since the embedding algorithm produces a partial function we 
first applied our proposed approach as this is the only approach out of the
three ones that supports partial functions.  Based on the resulting circuit we 
created a truth table representation using the RevKit program
`\emph{circuit\_to\_truth\_table}' which then was used as input to the truth
table based algorithm.  For the QMDD based algorithm we used the circuit as
input to construct the QMDD from which a different circuit is created.
Since the truth table based approach and the proposed approach create
single-target gates we used \texttt{exorcism}~\citep{MP:01} to translate
them into mixed-polarity multiple-control Toffoli gates.  The time required for
this translation is included in the overall reported run-times. The QMDD based
synthesis approach directly creates MPMCT gates.

Table~\ref{tbl:other-approaches} presents the results.  The first block of
columns denotes the name of the benchmark together with the number of circuit
lines~($n$).  For each algorithm a block lists the number of Toffoli
gates~($d$), the quantum costs in terms of $T$ gates in a Clifford+$T$ mapping
according to~\citep{AMMR:13}~($q$), and the run-time~(in seconds) required for
the whole synthesis~($t$).

It can be seen that our approach scales well and the performance
it is comparable to the QMDD based synthesis approach (fastest run-times are set
in bold face).  The truth table based approach was never the fastest one and
moreover, if the function has more than 15~variables, the truth table
based algorithm was not able to finish before the time-out.  It can be seen that
the run-time correlates much more to the number of variables for the truth table
based approach compared to the proposed approach.

In addition to the better run-times, the proposed approach also leads to better
results with respect to the number of Toffoli gates and quantum costs.  For the
latter, in the best case, we get an improvement of over~$83\%$ compared to the
original truth table based approach and over~$91\%$ compared to the QMDD based
approach.

\begin{remark}
  Due to Remark~\ref{rem:freedom} there must exist a choice of assignments such
  that the truth table based approach leads to the same results in terms of
  Toffoli costs as the proposed approach.  Hence, it seems that due to the
  symbolic decomposition described in Algorithm~D a better choice for the
  assignments is implicitly taken.
\end{remark}

\begin{remark}
  Most of the benchmarks used in Table~\ref{tbl:other-approaches} originally
  describe irreversible functions and were embedded as reversible functions.  We
  have used a different embedding approach~\citep{SWG+:14}.  This explains that
  in some cases a different number of lines and hence also to a different number
  of gates is obtained compared to the values given in~\citep{SWH+:2012}.
\end{remark}

\subsection{Evaluating Variable Ordering Heuristics}
\label{sec:eval-vari-order}
We have evaluated the heuristics for variable orderings as discussed in
Sect.~\ref{sec:ordering} for the same benchmarks as in the previous section and
listed the results in Table~\ref{tbl:orderings}.  Both the Greedy approach and
the approach based on the hamming distance can achieve significant improvement
in quantum costs.  The maximum improvement is over 92\% (\emph{co14}) for the
Greedy approach and over 80\% (\emph{dk27}) for the Hamming approach.

\begin{figure}[t]
\begin{displaymath}
  \begin{array}{>{\displaystyle}l>{\quad}r}
    F=\bigwedge_{i=1}^ny_i\leftrightarrow x_i & \text{(identity)} \\[7pt]
    F=\bigwedge_{i=1}^ny_i\oplus x_i & \text{(invert)} \\[5pt]
    F=\bigwedge_{i=1}^ny_i\leftrightarrow x_{(i+k)\operatorname{\mathrm{mod}}n + 1} & \text{(rotate)} \\[7pt]
    F=\bigwedge_{i=1}^ny_i\leftrightarrow
    \begin{cases}
      \bar x_i & \text{if $i$ is odd,} \\[7pt]
      x_{(i+2)\operatorname{\mathrm{mod}}n+1} & \text{if $i$ is even.}
    \end{cases} & \text{(invert or rotate)} \\[7pt]
    F=\bigwedge_{i=1}^{n/2}y_i\leftrightarrow x_i \land y_{2i}\leftrightarrow(x_i\oplus x_{2i}) & \text{(bitwise-xor)}
  \end{array}
\end{displaymath}
\caption{Functions for evaluating scalability}
\label{fig:scalability}
\end{figure}

The quantum cost is not always improved.  However, in case of the Greedy
approach the difference to the better solution is not too large.  In fact, in
these cases the gate count has actually decreased (which is also the cost
criteria in the implementation of the heuristics).  One can overcome this
problem by changing the cost criteria in the implementation which leads to a
higher run-time of the algorithm.  In case of the approach based on the hamming
distance, the quantum cost can also increase significantly (e.g.~\emph{C7752}).

The run-time for the Greedy approach is much higher whereas the run-time for the
Hamming approach is comparable to the original approach and usually correlates
with the achieved improvement.  If a higher improvement can be achieved, usually
less cycles need to be resolved and hence the run-time decreases.

\subsection{Evaluating Scalability}
In order to further evaluate the scalability of the proposed approach we have
created BDDs of characteristic functions that represent reversible functions
directly in memory.  Figure~\ref{fig:scalability} lists the functions that have
been used for this experiment and are parameterized by the number of lines.

Note that for \emph{invert or rotate} and \emph{bitwise-xor} the number of
lines~$n$ needs to be even.  We have implemented the experiment as the RevKit
test-case `\emph{rcbdd\_scalability}'.  The results are given by means of plots
in Fig.~\ref{fig:plots-dynamic}. The values of the $x$-axis and $y$-axis denote
the number of lines and run-time in seconds, respectively.
As can be seen functions with a large number of variables can be synthesized
with the proposed approach.  The run-time increases rapidly when the problem
instances get larger and the effect is more noticeable when the function is
complex.

\begin{figure}[t]
\footnotesize
\pgfplotsset{%
  every axis/.append style={width=.35\linewidth,minor tick num=1,%
    enlarge x limits=false,legend pos=north west,xlabel=Lines,ylabel=Time (seconds),
    every axis y label/.style={at={(ticklabel cs:0.5)},anchor=near ticklabel,rotate=90},
    every axis x label/.style={at={(ticklabel cs:0.5)},anchor=near ticklabel}},
  plot/.style={red,thick},
  tick label style={font=\scriptsize}
}
\begin{center}
\begin{tikzpicture}
  \begin{axis}
    \addplot[plot] table [x=n,y=time] {table-create_identity.dat};
    \addlegendentry{identity}
  \end{axis}
\end{tikzpicture} \hfill
\begin{tikzpicture}
  \begin{axis}
    \addplot[plot] table [x=n,y=time] {table-create_invert.dat};
    \addlegendentry{invert}
  \end{axis}
\end{tikzpicture} \hfill
\begin{tikzpicture}
  \begin{axis}
    \addplot[plot] table[x=n,y=time] {table-create_invert_or_rotate.dat};
    \addlegendentry{invert or rotate}
  \end{axis}
\end{tikzpicture} \hfill
\begin{tikzpicture}
  \begin{axis}
    \addplot[plot] table[x=n,y=time] {table-create_rotate_3.dat};
    \addlegendentry{rotate $k=3$}
  \end{axis}
\end{tikzpicture} \hfill
\begin{tikzpicture}
  \begin{axis}
    \addplot[plot] table[x=n,y=time] {table-create_rotate_5.dat};
    \addlegendentry{rotate $k=5$}
  \end{axis}
\end{tikzpicture} \hfill
\begin{tikzpicture}
  \begin{axis}
    \addplot[plot] table[x=n,y=time] {table-create_rotate_7.dat};
    \addlegendentry{rotate $k=7$}
  \end{axis}
\end{tikzpicture}\hfill
\begin{tikzpicture}
  \begin{axis}
    \addplot[plot] table[x=n,y=time] {table-create_bitwise_xor.dat};
    \addlegendentry{bitwise-xor}
  \end{axis}
\end{tikzpicture}
\end{center}
\caption{Evaluating scalability}
\label{fig:plots-dynamic}
\end{figure}
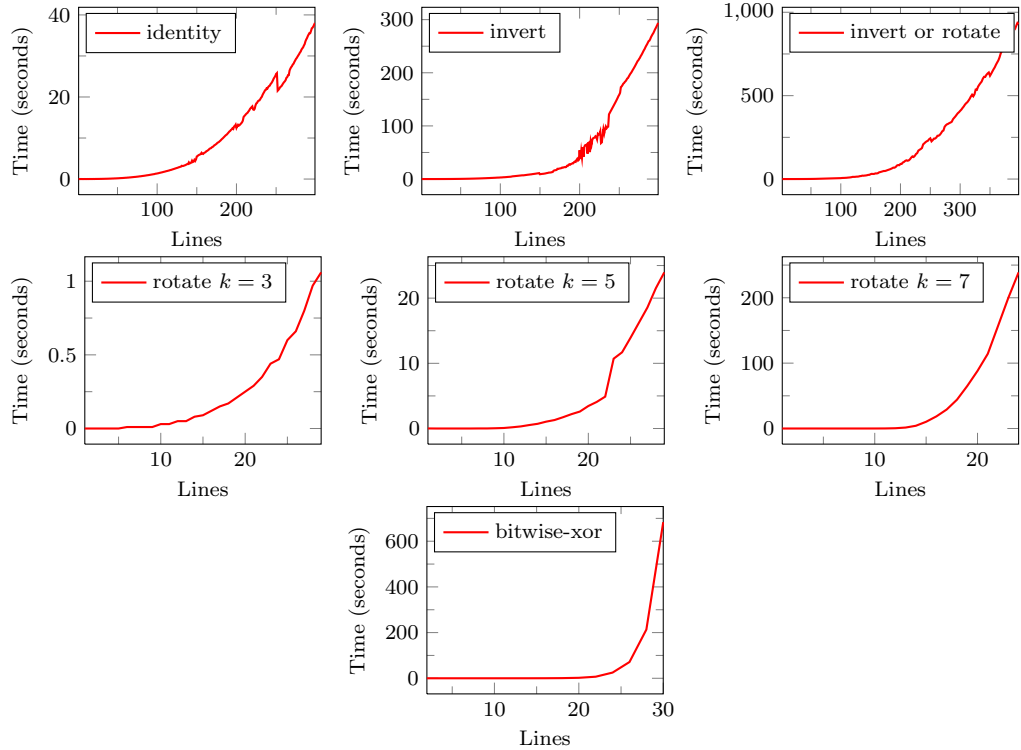

\section{Conclusions}
\label{sec:conclusions}
In this paper we presented an algorithm for ancilla-free synthesis of large
reversible functions.  The reversible function is represented by its
characteristic function using binary decision diagrams.  This enables an
efficient symbolic function manipulation.  We formalized a decomposition
technique that has formerly been used in a truth table based synthesis approach
by means of co-factors on the characteristic function.  This enabled a synthesis
approach that is applicable to significantly larger functions.  Additionally, we
proposed heuristics to reduce gate cost and provide extensions to apply the
algorithm to partial functions.

An experimental evaluation demonstrates the applicability of the proposed
approach to large functions and also shows that the realized circuits lead to
smaller circuits compared to state-the-art synthesis approaches.

Most run-time is spent on resolving cycles.  We are currently investigating how
this process can be done more efficiently.  One way to overcome the efficiency
problem is to synthesize transpositions directly.  This would however lead to
circuits that are not structured in a V-shape and as a result a linear number of
single-target gates for their representation is no longer guaranteed.  An
increase in gate count and circuit cost is also expected.  Further future work
considers an efficient representation of single-target gates and also a direct
mapping of them to quantum gates.

%\bibliographystyle{elsart-harv}
%\bibliography{library}

\begin{thebibliography}{29}
\expandafter\ifx\csname natexlab\endcsname\relax\def\natexlab#1{#1}\fi
\expandafter\ifx\csname url\endcsname\relax
  \def\url#1{\texttt{#1}}\fi
\expandafter\ifx\csname urlprefix\endcsname\relax\def\urlprefix{URL }\fi

\bibitem[{Abdessaied et~al.(2014)Abdessaied, Soeken, Thomsen, and
  Drechsler}]{ASTD:2014}
Abdessaied, N., Soeken, M., Thomsen, M.~K., Drechsler, R., 2014. Upper bounds
  for reversible circuits based on young subgroups. Inf. Process. Lett.
  114~(6), 282--286.

\bibitem[{Amy et~al.(2013)Amy, Maslov, Mosca, and Roetteler}]{AMMR:13}
Amy, M., Maslov, D., Mosca, M., Roetteler, M., 2013. A meet-in-the-middle
  algorithm for fast synthesis of depth-optimal quantum circuits. IEEE Trans.
  on CAD 32~(6), 818--830.

\bibitem[{B{\'e}rut et~al.(2012)B{\'e}rut, Arakelyan, Petrosyan, Ciliberto,
  Dillenschneider, and Lutz}]{BAP+:2012}
B{\'e}rut, A., Arakelyan, A., Petrosyan, A., Ciliberto, S., Dillenschneider,
  R., Lutz, E., 2012. {Experimental verification of Landauer’s principle
  linking information and thermodynamics}. Nature 483, 187--189.

\bibitem[{Brayton and Mishchenko(2010)}]{BM:10}
Brayton, R.~K., Mishchenko, A., 2010. {ABC}: An academic industrial-strength
  verification tool. In: Computer Aided Verification. pp. 24--40.

\bibitem[{Bryant(1986)}]{Bryant:1986}
Bryant, R.~E., 1986. Graph-based algorithms for {Boolean} function
  manipulation. {IEEE} Trans. on Comp. 35~(8), 677--691.

\bibitem[{De~Vos(2010)}]{Vos:2010}
De~Vos, A., 2010. {Reversible Computing - Fundamentals, Quantum Computing, and
  Applications}. Wiley, Weinheim.

\bibitem[{De~Vos and Van Ren\-tergem(2008)}]{VR:2008}
De~Vos, A., Van Ren\-tergem, Y., 2008. Young subgroups for reversible
  computers. Advances in Mathematics of Communications 2~(2), 183--200.

\bibitem[{Feynman(1985)}]{Feynman:1985}
Feynman, R.~P., 1985. Quantum mechanical computers. Optics News 11, 11--20.

\bibitem[{Knuth(2011)}]{Knuth:2011}
Knuth, D.~E., 2011. {The Art of Computer Programming}. Vol.~4A. Addison-Wesley,
  Upper Saddle River, New Jersey.

\bibitem[{Landauer(1961)}]{Landauer:1961}
Landauer, R., 1961. {Irreversibility and Heat Generation in the Computing
  Process}. IBM Journal of Research and Development 5~(3), 183--191.

\bibitem[{Maslov et~al.(2005)Maslov, Dueck, and Miller}]{MDM:05}
Maslov, D., Dueck, G.~W., Miller, D.~M., 2005. Toffoli network synthesis with
  templates. IEEE Trans. on CAD 24~(6), 807--817.

\bibitem[{Maslov et~al.(2007)Maslov, Dueck, and Miller}]{MDM:2007}
Maslov, D., Dueck, G.~W., Miller, D.~M., 2007. Techniques for the synthesis of
  reversible {Toffoli} networks. {ACM} Trans. Design Autom. Electr. Syst.
  12~(4), 42:1--42:28.

\bibitem[{Miller et~al.(2003)Miller, Maslov, and Dueck}]{MMD:2003}
Miller, D.~M., Maslov, D., Dueck, G.~W., 2003. A transformation based algorithm
  for reversible logic synthesis. In: Design Automation Conference. Vol.~40.
  pp. 318--323.

\bibitem[{Miller and Thornton(2006)}]{MT:2006}
Miller, D.~M., Thornton, M.~A., 2006. {QMDD}: A decision diagram structure for
  reversible and quantum circuits. In: Int'l Symp. on Multiple-Valued Logic.
  Vol.~36. p.~30.

\bibitem[{Mishchenko and Perkowski(2001)}]{MP:01}
Mishchenko, A., Perkowski, M., 2001. Fast heuristic minimization of exclusive
  sum-of-products. In: Int'l Reed-Muller Workshop.

\bibitem[{Saeedi and Markov(2013)}]{SM:2013}
Saeedi, M., Markov, I.~L., 2013. Synthesis and optimization of reversible
  circuits - a survey. ACM Computing Surveys 45~(2), 21:1--21:34.

\bibitem[{Saeedi et~al.(2010)Saeedi, Zamani, Sedighi, and Sasanian}]{SZSS:10}
Saeedi, M., Zamani, M.~S., Sedighi, M., Sasanian, Z., 2010. Reversible circuit
  synthesis using a cycle-based approach. ACM Journal on Emerging Technologies
  6~(4), 13.

\bibitem[{Sasanian et~al.(2009)Sasanian, Saeedi, Sedighi, and Zamani}]{SSSZ:09}
Sasanian, Z., Saeedi, M., Sedighi, M., Zamani, M.~S., 2009. A cycle-based
  synthesis algorithm for reversible logic. In: Asia and South Pacific Design
  Automation Conference. pp. 745--750.

\bibitem[{Sasao(1993)}]{Sas:93c}
Sasao, T., 1993. {AND-EXOR} expressions and their optimization. In: Sasao, T.
  (Ed.), Logic Synthesis and Optimization. Kluwer Academic Publisher, pp.
  287--312.

\bibitem[{Shannon(1938)}]{Shannon:1938}
Shannon, C.~E., 1938. {A Symbolic Analysis of Relay and Switching Circuits}.
  Trans. American Institute of Electrical Engineers 57~(38--80), 713--723.

\bibitem[{Shende et~al.(2003)Shende, Prasad, Markov, and Hayes}]{SPMH:2003}
Shende, V.~V., Prasad, A.~K., Markov, I.~L., Hayes, J.~P., 2003. Synthesis of
  reversible logic circuits. IEEE Trans. on CAD 22~(6), 710--722.

\bibitem[{Soeken et~al.(2014{\natexlab{a}})Soeken, Abdessaied, and
  Drechsler}]{SAD:14}
Soeken, M., Abdessaied, N., Drechsler, R., 2014{\natexlab{a}}. A framework for
  reversible circuit complexity. arXiv 1407.5878.

\bibitem[{Soeken et~al.(2012{\natexlab{a}})Soeken, Frehse, Wille, and
  Drechsler}]{SFWD:12}
Soeken, M., Frehse, S., Wille, R., Drechsler, R., 2012{\natexlab{a}}. {RevKit}:
  An open source toolkit for the design of reversible circuits. Lecture Notes
  in Computer Science 7165, 64--76, selected Papers from the Third
  International Workshop on Reversible Computation.

\bibitem[{Soeken et~al.(2014{\natexlab{b}})Soeken, Wille, G{\"o}ller, Keszocze,
  Miller, and Drechsler}]{SWG+:14}
Soeken, M., Wille, R., G{\"o}ller, S., Keszocze, O., Miller, D.~M., Drechsler,
  R., 2014{\natexlab{b}}. Embedding of large {Boolean} functions for reversible
  logic. arXiv 1408.3586.

\bibitem[{Soeken et~al.(2012{\natexlab{b}})Soeken, Wille, Hilken, Przigoda, and
  Drechsler}]{SWH+:2012}
Soeken, M., Wille, R., Hilken, C., Przigoda, N., Drechsler, R.,
  2012{\natexlab{b}}. Synthesis of reversible circuits with minimal lines for
  large functions. In: Asia and South Pacific Design Automation Conference.
  Vol.~17. pp. 85--92.

\bibitem[{Somenzi(2001)}]{Somenzi:2001}
Somenzi, F., 2001. Efficient manipulation of decision diagrams. STTT 3~(2),
  171--181.

\bibitem[{Toffoli(1980)}]{Toffoli:1980}
Toffoli, T., 1980. Reversible computing. In: Colloquium on Automata, Languages
  and Programming. Vol.~7. pp. 632--644.

\bibitem[{Touati et~al.(1990)Touati, Savoj, Lin, Brayton, and
  Sangiovanni-Vincentelli}]{TSL+:1990}
Touati, H.~J., Savoj, H., Lin, B., Brayton, R.~K., Sangiovanni-Vincentelli,
  A.~L., 1990. Implicit state enumeration of finite state machines using
  {BDDs}. In: Int'l Conf. on Computer Aided Design. Vol.~9. pp. 130--133.

\bibitem[{Wille and Drechsler(2009)}]{WD:2009}
Wille, R., Drechsler, R., 2009. {BDD}-based synthesis of reversible logic for
  large functions. In: Design Automation Conference. pp. 270--275.

\end{thebibliography}

\end{document}

%%% Local Variables:
%%% mode: latex
%%% mode: fci
%%% mode: flyspell
%%% mode: reftex
%%% TeX-master: t
%%% End: